\documentclass[%
 reprint,
 raggedbottom,
 superscriptaddress,
 amsmath,amssymb,
 aps,
 pra,
]{revtex4-2}

\usepackage{graphicx}
\usepackage{dcolumn}
\usepackage{bm}
\usepackage{amsfonts}
\usepackage{amsthm}
\usepackage{mathrsfs}
\usepackage{bbold}
\usepackage{hyperref}

\makeatletter
\newcommand{\articlefigure}[3]{%
  \par\addvspace{8pt}%
  \noindent\begin{minipage}{\linewidth}%
    \centering
    \includegraphics[width=#1]{#2}%
    \def\@captype{figure}%
    \caption{#3}%
  \end{minipage}%
  \par\addvspace{8pt}%
}
\makeatother

\DeclareMathOperator{\Tr}{Tr}

\newtheorem{lemma}{Lemma}
\newtheorem{corollary}{Corollary}
\newtheorem{proposition}{Proposition}

\theoremstyle{definition}

\begin{document}

\title{Game of Nons: Non-Gaussian Nonunitary Environments with Nonlinear Coupling}

\author{A.\,Yu. Karasev}
\email{artem.karasev.01@gmail.com}
\affiliation{Department of Mathematical Methods for Quantum Technologies,
Steklov Mathematical Institute of Russian Academy of Sciences,
ul. Gubkina 8, Moscow 119991, Russia}

\author{A.\,E. Teretenkov}
\email{taemsu@mail.ru}
\affiliation{Department of Mathematical Methods for Quantum Technologies,
Steklov Mathematical Institute of Russian Academy of Sciences,
ul. Gubkina 8, Moscow 119991, Russia}

\date{\today}

\begin{abstract}
We develop a perturbative framework for systems coupled quadratically
to bosonic environments with quadratic Hamiltonians and linear Lindblad
operators. The method applies to both single-mode and multimode
environments whose initial states may be non-Gaussian and nonstationary.
By combining linear evolution in a doubled operator space with a
characteristic-function representation, we express ordered multitime
environmental correlation functions in terms of derivatives of the
initial characteristic function. These correlation functions enter
a recursive construction of the time-convolutionless generator at
arbitrary finite order. We illustrate the method with a qubit coupled
to a damped mode via a two-photon interaction. Initial Fock and thermal
states with identical first and second field moments produce different
second-order generators through their fourth moments.
\end{abstract}

\maketitle

	\section{Introduction}

	Perturbative calculations of reduced dynamics require ordered multitime environmental correlation functions and their contributions to the reduced generator. For nonlinear system--environment coupling, these functions involve higher field moments that are not determined by the field covariance. Here we develop a method for evaluating them for quadratic coupling to a bosonic environment with nonunitary free dynamics and a possibly non-Gaussian initial state. Related approaches include dissipaton equations of motion for quadratic bath coupling~\cite{XuLiuZhangYan2017,XuZhangZhangYan2018,ChenWangXuYan2023}, higher-order spectra of non-Gaussian quantum noise~\cite{WangClerk2020}, and nonlinear pseudomode constructions~\cite{ZhangLambertLiCirioLiang2025}. Generalized hierarchical equations also incorporate multitime bath statistics to describe non-Gaussian environments~\cite{HsiehCao2018}.

	The construction applies to multimode bosonic environments with
	quadratic generators of the free dynamics. For such environments,
	closed evolution equations for higher moments have been
	derived~\cite{NosalTeretenkov2022}. For nonunitary free dynamics, the Heisenberg evolution of multiplication superoperators mixes left and right multiplication. We account for this mixing in a doubled operator space. Related bosonic operator-space representations were developed in Refs.~\cite{ProsenSeligman2010,BarthelZhang2022}. Algebraic methods based on quadratic combinations of left and right multiplication superoperators have also been developed for multimode bosonic Lindblad dynamics~\cite{GaidashEtAl2025}.

	A related generating-function approach evaluates correlation-function moments for polynomial system--bath coupling within memory-kernel coupling theory~\cite{BiLiuDou2025}. There, moments obtained from repeated action of the full Liouvillian are used to reconstruct a two-time correlation function for a thermal harmonic reservoir. Here, we instead evaluate ordered multitime environmental correlations for insertion into the TCL expansion. 

	Our method constructs the reduced TCL generator at any prescribed perturbative order from the initial environmental characteristic function, including non-Gaussian and nonstationary states. We combine recursive TCL constructions~\cite{GasbarriFerialdi2018} and the moment-based formulation for composite open quantum systems~\cite{KarasevTeretenkov2023} with doubled operator-space representations~\cite{ProsenSeligman2010,BarthelZhang2022}. The construction retains the ordering of left and right multiplication channels in multitime correlations, beyond the propagation of equal-time field moments~\cite{NosalTeretenkov2022,FilippovManko2011}. The doubled-space evolution determines the coefficients of the linear
	and quadratic forms representing the environmental interaction channels,
	while derivatives of the initial characteristic function carry the
	dependence on the environmental state. Changing the initial environmental
	state therefore requires only replacing these derivatives, with the
	dynamical coefficients remaining unchanged.

	As an explicit application, we consider a qubit quadratically coupled to a single environmental mode. The qubit constitutes the finite-dimensional subsystem of interest \(A\), while restricting the environment to one mode keeps direct numerical checks of the composite dynamics tractable. By contrast, increasing the number of environmental modes rapidly increases the dimension of the truncated Hilbert space used in numerical experiments. This model illustrates how higher-order field moments contribute to
	the reduced dynamics already at the lowest nontrivial perturbative order. This complements nonlinear-coupling approaches based on Gaussian bath statistics~\cite{XuLiuZhangYan2017,ZhangLambertLiCirioLiang2025} by allowing the initial environmental state itself to be non-Gaussian and nonstationary.

	\section{\label{sec:reduced-tcl-expansion}From environmental correlations to reduced dynamics}

	\subsection{Composite system and initial state}

	We consider a composite open quantum system with Hilbert space \(\mathcal H=\mathcal H_A\otimes\mathcal H_B\), where subsystem \(A\) is the system of interest and subsystem \(B\) is its environment. Hats denote Hilbert-space operators, and calligraphic letters denote superoperators, with \(\mathcal H\) reserved for Hilbert spaces.

	The density operator \(\hat\rho_{AB}(t) \in \mathcal H\) satisfies
	\begin{equation}
		\frac{d}{dt}\hat\rho_{AB}(t)
		=
		\left(
		\mathcal L_0
		+
		\lambda\mathcal V
		\right)
		\hat\rho_{AB}(t),
		\label{eq:composite-master-equation}
	\end{equation}
	where \(\lambda\) is a small dimensionless parameter controlling the
	overall interaction strength.

	The free generator is assumed to be local,
	\begin{equation}
		\mathcal L_0
		=
		\mathcal L_0^A\otimes\mathcal I_B
		+
		\mathcal I_A\otimes\mathcal L_0^B,
	\end{equation}
	where \(\mathcal I_A\) and \(\mathcal I_B\) are the identity
	superoperators on the corresponding subsystems.
	We assume that \(\mathcal L_0^A\) and \(\mathcal L_0^B\) are time-independent
	generators of completely positive trace-preserving semigroups.  No restriction
	on the dimension of subsystem \(A\) is imposed at this stage. The assumptions on the quadratic environmental
	generator and the
	field moments required at each finite order are specified in
	Sec.~\ref{sec:quadratic-environment}.
	We denote the free environmental semigroup by
	\(\mathcal T_t=e^{t\mathcal L_0^B}\).

	The interaction superoperator is given by the commutator with
	the interaction Hamiltonian,
	\begin{equation}
		\mathcal V
		=
		-i[\hat H_I,\bullet \;].
	\end{equation}
	We assume that it admits a finite decomposition
	\begin{equation}
		\mathcal V
		=
		\sum_{i=1}^{s}
		\mathcal V_i^A
		\otimes
		\mathcal V_i^B.
	\end{equation}
	The factors \(\mathcal V_i^A\) and \(\mathcal V_i^B\) are superoperators on
	the corresponding subsystems and need not separately be dynamical generators.

	To derive the reduced evolution equation, we employ the projection-operator approach with the standard Argyres--Kelley projector~\cite{ArgyresKelley1964}. For every trace-class operator \(\hat X\) on \(\mathcal{H}\), it is defined by
\begin{equation}
	\mathcal P\hat X
	=
	\operatorname{tr}_B[\hat X]
	\otimes
	\hat\rho_B,
	\qquad
	\mathcal P^2=\mathcal P.
\end{equation}
We assume a factorized initial state with the same bath state
\(\hat\rho_B\) as in the definition of the projector,
\begin{equation}
	\hat\rho_{AB}(0)=\hat\rho_A(0)\otimes\hat\rho_B.
\end{equation}
Thus, the initial state is consistent with the projector,
so that \(\mathcal P\hat\rho_{AB}(0)=\hat\rho_{AB}(0)\) and
\((\mathcal I_{AB}-\mathcal P)\hat\rho_{AB}(0)=0\).
	
	In this work, we restrict the free dynamics of subsystem \(A\) to
	unitary evolution, so its inverse propagator is well defined.
	For the dissipative environment \(B\), the inverse free propagator
	is used formally as an algebraic tool. With this convention,
	the interaction-picture propagator is
	\begin{equation}
		\mathcal U_I(t)
		=
		e^{-t\mathcal L_0}
		e^{t(\mathcal L_0+\lambda\mathcal V)},
	\end{equation}
	where the factor \(e^{-t\mathcal L_0}\) removes the free evolution
	generated by \(\mathcal L_0\). The resulting propagator satisfies
	\begin{equation}
		\frac{d}{dt}\mathcal U_I(t)
		=
		\lambda\mathcal V(t)\mathcal U_I(t),
		\qquad
		\mathcal U_I(0)=\mathcal I_{AB},
	\end{equation}
	with the interaction-picture generator
	\begin{equation}
		\mathcal V(t)
		=
		e^{-t\mathcal L_0}
		\mathcal V
		e^{t\mathcal L_0}.
	\end{equation}
Applying the projector \(\mathcal P\) to this evolution defines
the reduced interaction-picture propagator \(\mathcal R(t)\).
Its action on a system operator \(\hat X_A\) is obtained by evolving
\(\hat X_A\otimes\hat\rho_B\) and tracing out the bath,
	\begin{equation}
		\mathcal R(t)\hat X_A
		=
		\operatorname{tr}_B\!\left[
		\mathcal U_I(t)
		\left(\hat X_A\otimes\hat\rho_B\right)
		\right].
	\end{equation}
	Equivalently, the projected evolution can be written as
	\begin{equation}
		\mathcal P\mathcal U_I(t)\mathcal P
		\left(\hat X_A\otimes\hat\rho_B\right)
		=
		\left(\mathcal R(t)\hat X_A\right)\otimes\hat\rho_B.
		\label{eq:reduced-interaction-propagator}
	\end{equation}

	\subsection{\label{sec:moment-functions}Recursive construction of the TCL generator}
		Our object of interest is the time-local generator \(\mathcal K(t)\)
	of the reduced evolution, which governs the interaction-picture state
	according to
	\begin{equation}
		\frac{d}{dt}\hat\rho_A^I(t)
		=\mathcal K(t)\hat\rho_A^I(t),
		\qquad
		\hat\rho_A^I(t)=\mathcal R(t)\hat\rho_A(0).
		\label{eq:reduced-state-tcl-equation}
	\end{equation}
We expand \(\mathcal K(t)\) formally in powers of the small dimentionless
parameter \(\lambda\),
	\begin{equation}
		\mathcal K(t)
		=
		\sum_{n=1}^{\infty}\lambda^n\mathcal K_n(t),
	\end{equation}
	expressing its coefficients in time-convolutionless (TCL) form~\cite{ShibataTakahashiHashitsume1977,ChaturvediShibata1979}. The coefficients \(\mathcal K_n(t)\) are the ordered cumulant contributions of the Kubo--van Kampen expansion~\cite{Kubo1962,VanKampen1974I,VanKampen1974II}. The expansion is understood purely formally: we neither establish its convergence nor derive bounds on the remainders after truncation. These questions are left for future research.

	The moments \(\mathcal M_n(t)\) are defined by the formal Dyson expansion
	of the reduced propagator
	\begin{equation}
		\mathcal R(t)
		=
		\mathcal I_A
		+
		\sum_{n=1}^{\infty}\lambda^n\mathcal M_n(t).
	\end{equation}
	Substituting this expansion and that of \(\mathcal K(t)\) into
	\(\dot{\mathcal R}(t)=\mathcal K(t)\mathcal R(t)\) and comparing
	equal powers of \(\lambda\) gives the recurrence
	\begin{equation}
		\label{eq:cumulant-recurrence}
		\mathcal K_N(t)
		=
		\dot{\mathcal M}_N(t)
		-
		\sum_{r=1}^{N-1}
		\mathcal K_r(t)\mathcal M_{N-r}(t).
	\end{equation}
	The order of the superoperator factors in Eq.~\eqref{eq:cumulant-recurrence} is essential. This recurrence is the noncommutative analogue of passing from moments to cumulants and is closely related, up to notation and ordering conventions, to the recursive construction of the non-Markovian time-convolutionless generator developed in Ref.~\cite{GasbarriFerialdi2018}.
	
	The method yields systematic corrections at arbitrary orders
	of perturbation theory. We give the complete expressions needed
	through second order, where two-time environmental correlations
	first enter beyond the single-time averages of the first-order term.
	The fourth-order contribution used in the numerical examples below
	is likewise obtained algebraically using the general algorithm
	implemented in Wolfram Mathematica.

	Practical implementations of fourth-order TCL generators have been used to identify corrections missed by second-order Bloch--Redfield dynamics~\cite{CrowderLampertManchandaEtAl2024} and to benchmark completely positive approximations to the Redfield equation~\cite{Davidovic2022}. Comparisons with numerically exact spin--boson dynamics identify parameter regimes in which TCL4 improves on TCL2~\cite{ChenCrowderXiangDavidovic2025}. A complementary geometric--arithmetic construction enforces complete positivity while retaining nonsecular terms of the Redfield equation~\cite{Davidovic2020}. These works primarily address stationary reservoirs and linear system--environment coupling.

	For composite open quantum systems with nonunitary free dynamics, expressions for the moments \(\mathcal M_N(t)\) in terms of iterated time integrals and projection superoperators are given in Ref.~\cite{KarasevTeretenkov2023}. For our purposes, it is convenient to evaluate the action of the projectors before performing the time integrations. This expresses the moments through ordered multitime environmental correlation functions and system superoperators.
	Trace preservation allows the outermost factor
	\(\mathcal T_{-t_N}\) in the interaction-picture expression to be
	omitted under the trace. The resulting forward-propagator representation
	defines the correlation without applying inverse environmental evolution
	to a state. For \(0\leq t_1\leq\cdots\leq t_N\), it reads
	\begin{align}
		&m_{i_N\dots i_1}(t_N,\dots,t_1)
		\nonumber\\
		&=
		\operatorname{tr}_B
		\Bigl[
		\mathcal V_{i_N}^B
		\mathcal T_{t_N-t_{N-1}}
		\mathcal V_{i_{N-1}}^B
		\cdots
		\mathcal T_{t_2-t_1}
		\mathcal V_{i_1}^B
		\mathcal T_{t_1}
		\hat\rho_B
		\Bigr].
		\label{eq:environmental-multitime-moment-forward}
	\end{align}
	Note that \(\hat\rho_B\) is generally nonstationary, so this expression
	depends not only on time differences.

	In terms of these correlation functions, the \(N\)-th moment is
	\begin{align}
		\mathcal M_N(t)
		&=
		\sum_{i_N,\dots,i_1}
		\int_{0<t_1<\cdots<t_N<t}
		dt_1\cdots dt_N
		\nonumber\\
		&\quad\times
		\mathcal V_{i_N}^A(t_N)
		\cdots
		\mathcal V_{i_1}^A(t_1)
		m_{i_N\dots i_1}(t_N,\dots,t_1),
		\label{eq:MN-environmental-moments}
	\end{align}
	where 	\(\mathcal V_i^A(t)=e^{-t\mathcal L_0^A}\mathcal V_i^A e^{t\mathcal L_0^A}\).
	
	The derivative \(\dot{\mathcal M}_N(t)\) is obtained by removing
	the outermost time integral and setting \(t_N=t\) in the integrand,
	while retaining the ordering \(0<t_1<\cdots<t_{N-1}<t\).
	At first order, the recurrence gives
	\begin{align}
		\mathcal K_1(t)
		=
		\dot{\mathcal M}_1(t)
		=
		\sum_i
		\mathcal V_i^A(t)m_i(t),
	\end{align}
	where \(m_i(t)\) is the \(N=1\) case of
	Eq.~\eqref{eq:environmental-multitime-moment-forward}.
	At second order,
	Eq.~\eqref{eq:cumulant-recurrence} then gives
\begin{align}
	\mathcal K_2(t)
	&=
	\dot{\mathcal M}_2(t)
	-
	\mathcal K_1(t)\mathcal M_1(t)
	\nonumber\\
	&=
	\sum_{i,j}
	\mathcal V_i^A(t)
	\int_0^t dt_1\,
	\mathcal V_j^A(t_1)
	\nonumber\\
	&\qquad{}\times
	\left[
	m_{ij}(t,t_1)
	-
	m_i(t)m_j(t_1)
	\right].
\end{align}
	Thus, the second-order contribution is determined by the ordered
	"superoperator covariance". For quadratic coupling, this covariance
	involves field moments up to fourth order. Obviously, perturbative order \(N\) in our quadratic setup requires field moments up to order \(2N\).

	The preceding construction leads to three main steps for calculating
	the reduced generator at any fixed perturbative order:
	\begin{enumerate}
		\item Compute the Heisenberg evolution
		\(\mathcal V_i^B(t)\) of every environmental superoperator entering
		the interaction.
		\item Evaluate the ordered multitime environmental correlation
		functions \(m_{i_r\dots i_1}\) required up to the chosen order,
		using the characteristic function of \(\hat\rho_B\).
		\item Assemble \(\mathcal M_r(t)\) and \(\dot{\mathcal M}_r(t)\)
		using the ordered time integral in
		Eq.~\eqref{eq:MN-environmental-moments} and the differentiation
		rule stated immediately below it, then construct
		\(\mathcal K_r(t)\) successively from
		Eq.~\eqref{eq:cumulant-recurrence}.
	\end{enumerate}
	
	Section~\ref{sec:quadratic-environment} develops the first two steps
	in detail for a multimode bosonic environment.

	\section{\label{sec:quadratic-environment}Tracking correlations in a multimode environment}

	To evaluate the environmental correlation functions entering the TCL
	construction for both single-mode and multimode environments, it is
	convenient to use the operator and moment formalism developed in
	Refs.~\cite{Teretenkov2019,NosalTeretenkov2022,MeretukovTeretenkov2026}.
	This formalism also extends to fermionic environments.
	We use arrows for vectors \(\vec{x}\) and uppercase letters for matrices  \(X\).
	For an environment of \(d\) bosonic modes, the vector of creation and
	annihilation operators is denoted by
	\begin{equation}
		\mathfrak a
		=
		(\hat a_1,\dots,\hat a_d,
		\hat a_1^\dagger,\dots,\hat a_d^\dagger)^{\mathsf T}.
	\end{equation}
	We also introduce the symplectic and exchange matrices
	\begin{equation}
		J
		=
		\begin{pmatrix}
			0&-I_d\\
			I_d&0
		\end{pmatrix},
		\qquad
		E
		=
		\begin{pmatrix}
			0&I_d\\
			I_d&0
		\end{pmatrix},
		\label{eq:exchange-matrix}
	\end{equation}
	and the involution
	\begin{equation}
		\widetilde{\vec{x}}=E\vec{x}^{\,*},
		\qquad
		\widetilde X=EX^* E,
		\label{eq:tilde-conjugation}
	\end{equation}
	for vectors and matrices, respectively.  The star denotes elementwise
	complex conjugation.  The canonical commutation relations take the compact
	form
	\begin{equation}
		[\mathfrak a,\mathfrak a^{\mathsf T}]
		=
		-J.
	\end{equation}

	The free environmental dynamics is generated by the time-independent
	quadratic GKSL generator
	\begin{align}
		\mathcal L_0^B
		=
		&-i
		\left[
		\frac12
		\mathfrak a^{\mathsf T}H\mathfrak a,
		\,\bullet
		\right]
		\nonumber\\
		&+
		\mathfrak a^{\mathsf T}\bullet\,\Gamma\,\mathfrak a
		-
		\frac12
		\mathfrak a^{\mathsf T}\Gamma^{\mathsf T}\mathfrak a\,\bullet
		-
		\frac12
		\bullet\,
		\mathfrak a^{\mathsf T}\Gamma^{\mathsf T}\mathfrak a .
	\end{align}
	In this work, we assume that the linear Hamiltonian term is absent,
	which simplifies the resulting expressions. The coefficient matrices satisfy
	\begin{equation}
		H=H^{\mathsf T}=\widetilde H,
		\qquad
		\Gamma^{\mathsf T}=\widetilde\Gamma,
		\qquad
		\Gamma E\succeq0.
		\label{eq:quadratic-gksl-conditions}
	\end{equation}
	These conditions ensure that the resulting generator is of GKSL form.

    Each environmental interaction factor \(\mathcal V_i^B\) is assumed
	to be a polynomial of degree at most two in left and right multiplication by
	the components of \(\mathfrak a\).

	The key to applying the same algorithm to different environmental
	states is to represent the density operator through its characteristic
	function. This representation separates the dependence on the initial
	state from the evolution of the interaction channels.
	For the reference state \(\hat\rho_B\), the characteristic function
	is defined by
\begin{equation}
	\chi(\vec{z})
	=
	\operatorname{tr}_B
	\Bigl[
	e^{i\vec{z}^{\mathsf T}\mathfrak a}
	\hat\rho_B
	\Bigr],
\end{equation}
where
\begin{equation}
	\vec{z}
	=
	(z_1,\dots,z_d,
	z_1^*,\dots,z_d^*)^{\mathsf T}.
\end{equation}
	Thus the dependence on a particular environmental state is entirely encoded in
	\(\chi(\vec{z})\). At perturbative order \(N\), we require the characteristic function
	\(\chi\) to be continuously differentiable up to order \(2N\)
	in a neighborhood of the origin, since each environmental interaction
	factor is quadratic in the creation and annihilation operators.
	
	Using the inverse Weyl transform~\eqref{eq:weyl-inversion}, we rewrite
	the environmental correlation functions defined
	in Eq.~\eqref{eq:environmental-multitime-moment-forward} as
	\begin{align}
		&m_{i_N\dots i_1}(t_N,\dots,t_1)
		\nonumber\\
		&=
		\frac1{\pi^d}
		\int
		\mathrm d^{2d}\vec{z}\,
		\chi(\vec{z})
		\operatorname{tr}_B
		\Bigl[
		\mathcal V_{i_N}^B
		\mathcal T_{t_N-t_{N-1}}
		\mathcal V_{i_{N-1}}^B
		\cdots
		\nonumber\\
		&\qquad\qquad\times
		\mathcal T_{t_2-t_1}
		\mathcal V_{i_1}^B
		\mathcal T_{t_1}
		e^{-i\vec{z}^{\mathsf T}\mathfrak a}
		\Bigr].
	\end{align}

	In what follows, we denote the exponential
	\(e^{-i\vec{z}^{\mathsf T}\mathfrak a}\) by \(\hat W(\vec{z})\).
	The evaluation of perturbative corrections therefore reduces to
	calculating the Heisenberg evolution of multiplication superoperators
	acting on these Weyl operators.

	\subsection{Linear evolution in a doubled space}

	Evaluating the environmental correlation functions requires both left and right
	multiplications by environmental operators.
	To treat these actions in a unified manner, we introduce the doubled vector of
	multiplication superoperators
	\begin{equation}
		\mathfrak A
		=
		\begin{pmatrix}
			\mathfrak a\bullet \\
			\bullet E\mathfrak a
		\end{pmatrix},
	\label{eq:doubled-vector}
	\end{equation}
	where \(E\) is the exchange matrix defined in Eq.~\eqref{eq:exchange-matrix}. The first block corresponds to left multiplication, whereas the second block corresponds to right multiplication. Bold symbols denote vectors
	\(\vec{\mathbf{x}}\) and matrices \(\mathbf X\) in the doubled space.

	The doubled vector satisfies canonical commutation relations
	\begin{equation}
		[\mathfrak A,\mathfrak A^{\mathsf T}]
		=
		-\mathbf{J},
		\qquad
		\mathbf{J}
		=
		J\oplus J.
	\end{equation}

	For the time-independent quadratic GKSL generator considered here, the
	Liouvillian can be written as
	\begin{equation}
		\mathcal L_0^B
		=
		\frac12
		\mathfrak A^{\mathsf T}
		\mathbf{L}
		\mathfrak A
		+
		\frac12\operatorname{Tr}[\Gamma J]\mathcal I_B,
		\label{eq:doubled-liouvillian}
	\end{equation}
	where the symmetric matrix \(\mathbf{L}\in\mathbb C^{4d\times4d}\) is given in
	Appendix~\ref{app:doubled-heisenberg}, Eq.~\eqref{eq:doubled-liouvillian-matrix}.

	We define the Heisenberg evolution of the multiplication superoperators by
	\begin{equation}
		\mathfrak A(t)
		=
		\mathcal T_{-t}
		\mathfrak A
		\mathcal T_t.
	\label{eq:doubled-evolution-definition}
	\end{equation}
	This is a similarity transformation on superoperators, rather than
	the adjoint evolution of Hilbert-space observables. It preserves
	products of superoperators, whereas dissipative evolution of observables
	generally does not preserve their products. The key advantage of these doubled-space operators is that their
	evolution is governed by a closed linear system. Since the generator
	is time independent, the propagator is a matrix exponential:
	\begin{equation}
		\mathfrak A(t)
		=
		\mathbf{G}(t)\mathfrak A,
		\qquad
		\mathbf{G}(t)
		=
		e^{-\mathbf{J}\mathbf{L}t}.
		\label{eq:doubled-linear-propagator}
	\end{equation}

	The commutation relations and the evolution formula are proved in
	Appendix~\ref{app:doubled-heisenberg}.

	To relate this evolution to the forward-propagator correlations,
	write \(\mathcal V_i^B(t)=\mathcal T_{-t}\mathcal V_i^B\mathcal T_t\).
	On the domain where this similarity transformation is defined,
	the semigroup composition law gives
	\begin{align}
		&\mathcal V_{i_N}^B(t_N)\cdots\mathcal V_{i_1}^B(t_1)
		\nonumber\\
		&=\mathcal T_{-t_N}\mathcal V_{i_N}^B
		\mathcal T_{t_N-t_{N-1}}\mathcal V_{i_{N-1}}^B
	\cdots\mathcal T_{t_2-t_1}\mathcal V_{i_1}^B\mathcal T_{t_1}.
	\end{align}
	Taking the trace removes only the leftmost \(\mathcal T_{-t_N}\),
	so Eq.~\eqref{eq:environmental-multitime-moment-forward} is equivalently
	\begin{align}
		&m_{i_N\dots i_1}(t_N,\dots,t_1)
		\nonumber\\
		&=\operatorname{tr}_B\!\left[
		\mathcal V_{i_N}^B(t_N)\cdots\mathcal V_{i_1}^B(t_1)\hat\rho_B
		\right].
		\label{eq:environmental-multitime-moment-interaction}
	\end{align}
	The forward representation defines the correlation through the
	environmental dynamics, while the product representation evaluates
	it algebraically by evolving the coefficients of the interaction
	channels with \(\mathbf G(t)\) and taking the expectation value
	in the initial environmental state.

	Since similarity transformations preserve products, each environmental
	interaction factor evolves as
	\begin{equation}
		\mathcal V_i^B(t)
		=
		c_i\mathcal I_B
		+
		\vec{\mathbf{v}}_i(t)^{\mathsf T}\mathfrak A
		+
		\mathfrak A^{\mathsf T}\mathbf{V}_i(t)\mathfrak A,
	\end{equation}
	with
	\begin{equation}
		\vec{\mathbf{v}}_i(t)
		=
		\mathbf{G}(t)^{\mathsf T}\vec{\mathbf{v}}_i,
		\qquad
		\mathbf{V}_i(t)
		=
		\mathbf{G}(t)^{\mathsf T}\mathbf{V}_i\mathbf{G}(t).
	\end{equation}
	Here \(c_i\in\mathbb C\), \(\vec{\mathbf{v}}_i=\vec{\mathbf{v}}_i(0)\in\mathbb C^{4d}\), and
	\(\mathbf{V}_i=\mathbf{V}_i(0)\in\mathbb C^{4d\times4d}\), with \(\mathbf{V}_i=\mathbf{V}_i^{\mathsf T}\),
	specify the initial interaction factor.
	Consequently, each environmental correlation function is reduced to the trace of an ordered
	product of time-dependent polynomials in the static doubled vector
	\(\mathfrak A\). The next subsection converts these products into differential
	operators in the original phase space.

	\subsection{Extracting moments from the characteristic function}
	
	After substituting the inverse Weyl transform, left and right
	multiplication by creation and annihilation operators can be represented
	by differential operators acting on \(\hat W(\vec z)\).
	Moving these differential operators outside the trace leaves
	\(\operatorname{tr}_B\hat W(\vec z)\), which is proportional to
	the delta function at the origin. Integration by parts then expresses
	the environmental moments in terms of derivatives of the characteristic
	function, eliminating the Hilbert-space operators from the calculation.
	
	Differentiation with respect to \(z_k\) and \(z_k^*\) means
	Wirtinger differentiation. We use the vector
	\begin{equation}
		\partial_{\vec{z}}
		=
		\left(
		\partial_{z_1},\ldots,\partial_{z_d},
		\partial_{z_1^*},\ldots,\partial_{z_d^*}
		\right)^{\mathsf T},
	\end{equation}

Once the time evolution of the linear or quadratic forms in the doubled
space has been determined, the remaining task is to evaluate their action
on the Weyl kernel in the original phase-space variables. To do so, we
separate the coefficients associated with left and right multiplication:
\begin{equation}
	\vec{\mathbf{x}}=
	\begin{pmatrix}
		\vec{x}_L\\
		\vec{x}_R
	\end{pmatrix},
	\qquad
	\vec{x}_L,\vec{x}_R\in\mathbb C^{2d}.
\end{equation}
The two blocks of \(\mathfrak A\) represent left multiplication by
\(\mathfrak a\) and right multiplication by \(E\mathfrak a\).
When acting on \(\hat W(\vec z)\), left and right multiplication
by the same field operator give identical derivative terms but
coordinate-dependent terms of opposite sign. Since the right block contains \(E\mathfrak a\), its
coefficient vector enters these combinations as \(E\vec{x}_R\).
This motivates the definitions
\begin{equation}
	\vec{p}_{\vec{\mathbf{x}}}=\vec{x}_L+E\vec{x}_R,
	\qquad
	\vec{q}_{\vec{\mathbf{x}}}=\vec{x}_L-E\vec{x}_R.
\end{equation}
For subsequent calculations, we write these combinations as
\(\vec{p}_{\vec{\mathbf{x}}}=P\vec{\mathbf{x}}\) and
\(\vec{q}_{\vec{\mathbf{x}}}=Q\vec{\mathbf{x}}\), where
\begin{equation}
	P=\begin{pmatrix}I_{2d}&E\end{pmatrix},
	\qquad
	Q=\begin{pmatrix}I_{2d}&-E\end{pmatrix}.
\end{equation}
The action of a linear form in the doubled space can then be
written as
\begin{equation}
	(\vec{\mathbf{x}}^{\mathsf T}\mathfrak A)\hat W(\vec z)
	=\mathscr D_{\vec{\mathbf{x}}}\hat W(\vec z),
\end{equation}
where
\begin{equation}
	\mathscr D_{\vec{\mathbf{x}}}
	=i\left(
	\vec{p}_{\vec{\mathbf{x}}}^{\mathsf T}\partial_{\vec z}
	+\frac12\vec{q}_{\vec{\mathbf{x}}}^{\mathsf T}J\vec z
	\right).
	\label{eq:def-D-script-x}
\end{equation}
	The trace of a linear form is then
	\begin{equation}
		\operatorname{tr}_B
		\Bigl[
		\vec{\mathbf{x}}^{\mathsf T}\mathfrak A \hat W(\vec{z})
		\Bigr]
		=
		\pi^d\mathscr D_{\vec{\mathbf{x}}}\delta^{(2d)}(\vec{z}).
	\end{equation}
	The ordered-string version of the preceding identity is
	\begin{equation}
		\label{eq:ordered-doubled-string}
		\operatorname{tr}_B
		\Bigl[
		(\vec{\mathbf{x}}_1^{\mathsf T}\mathfrak A)
		\cdots
		(\vec{\mathbf{x}}_m^{\mathsf T}\mathfrak A)
		\hat W(\vec{z})
		\Bigr]
		=
		\pi^d
		\mathscr D_{\vec{\mathbf{x}}_m}
		\cdots
		\mathscr D_{\vec{\mathbf{x}}_1}
		\delta^{(2d)}(\vec{z}).
	\end{equation}
	Thus the differential composition is written in reverse order relative to
	the original multiplication-superoperator string. Define the ordered
	environmental moments by
	\begin{equation}
		m(\vec{\mathbf{x}}_1,\ldots,\vec{\mathbf{x}}_m)
		=
		\operatorname{tr}_B
		\Bigl[
		(\vec{\mathbf{x}}_1^{\mathsf T}\mathfrak A)\cdots
		(\vec{\mathbf{x}}_m^{\mathsf T}\mathfrak A)\hat\rho_B
		\Bigr].
	\end{equation}
	The inverse characteristic-function representation gives
	\begin{equation}
		\label{eq:environmental-moment-differential}
		m(\vec{\mathbf{x}}_1,\ldots,\vec{\mathbf{x}}_m)
		=
		\int \mathrm d^{2d}\vec{z}\,
		\chi(\vec{z})
		\mathscr D_{\vec{\mathbf{x}}_m}\cdots\mathscr D_{\vec{\mathbf{x}}_1}
		\delta^{(2d)}(\vec{z}).
	\end{equation}

	Moments of products of quadratic forms follow directly from the
	preceding formula by expanding each quadratic form into products
	of linear factors. For any rank-one decomposition
	\begin{equation}
		\label{eq:rank-one-decomposition}
		\mathbf{X}
		=
		\sum_\alpha
		\vec{\mathbf{x}}_\alpha \vec{\mathbf{y}}_\alpha^{\mathsf T},
	\end{equation}
	the quadratic form satisfies
	\begin{equation}
		\operatorname{tr}_B
		\Bigl[
		(\mathfrak A^{\mathsf T}\mathbf{X}\mathfrak A)
		\hat W(\vec{z})
		\Bigr]
		=
		\pi^d
		\sum_\alpha
		\mathscr D_{\vec{\mathbf{y}}_\alpha}
		\mathscr D_{\vec{\mathbf{x}}_\alpha}
		\delta^{(2d)}(\vec{z}).
	\end{equation}
	For decompositions
	\begin{equation}
		\mathbf{X}_\ell
		=
		\sum_{\alpha_\ell}
		\vec{\mathbf{x}}_{\ell\alpha_\ell}
		\vec{\mathbf{y}}_{\ell\alpha_\ell}^{\mathsf T},
		\qquad
		\ell=1,\ldots,N,
	\end{equation}
	the corresponding product is therefore reduced according to
	\begin{align}
		&
		\operatorname{tr}_B
		\Bigl[
		(\mathfrak A^{\mathsf T}\mathbf{X}_1\mathfrak A)
		\cdots
		(\mathfrak A^{\mathsf T}\mathbf{X}_N\mathfrak A)
		\hat\rho_B
		\Bigr]
		\nonumber\\
		&=
		\sum_{\alpha_1,\ldots,\alpha_N}
		m
		\left(
		\vec{\mathbf{x}}_{1\alpha_1},\vec{\mathbf{y}}_{1\alpha_1},
		\ldots,
		\vec{\mathbf{x}}_{N\alpha_N},\vec{\mathbf{y}}_{N\alpha_N}
		\right).
	\end{align}
	The differential identities and the integration-by-parts rule are derived
	in Appendix~\ref{app:differential-representation}.
	Explicit reduction formulas are collected in
	Appendix~\ref{app:wick}.

\subsection{\label{sec:normal-ordering}Algebraic formulas for environmental moments}

The differential representation obtained above allows us to express
the environmental moments of ordered operator strings as finite
sums of derivatives of the characteristic function evaluated at
the origin. The coefficients account for the order of the original
operators. This gives the following formula, valid for arbitrary
environmental states with the required moments.

\begin{proposition}[Algebraic reduction of environmental moments]
		\label{prop:environmental-moments}
		If the reference state possesses moments up to order \(m\), then
		\begin{align}
			&m(\vec{\mathbf{x}}_1,\ldots,\vec{\mathbf{x}}_m)
			\nonumber\\
			&=
			\sum_{\pi\in\operatorname{Pair}_{\leq}(m)}
			\left(
			\prod_{\substack{\{r,s\}\in\pi\\r<s}}
			\frac12
			\vec{q}_{\vec{\mathbf{x}}_s}^{\mathsf T}J\vec{p}_{\vec{\mathbf{x}}_r}
			\right)
			(-i)^{m-2|\pi|}
			\nonumber\\
			&\quad\times
			\left.
			\left(
			\prod_{k\notin\operatorname{supp}\pi}
			\vec{p}_{\vec{\mathbf{x}}_k}^{\mathsf T}\partial_{\vec{z}}
			\right)
			\chi(\vec{z})
			\right|_{\vec{z}=\vec{0}}.
		\end{align}
		Here \(\operatorname{Pair}_{\leq}(m)\) denotes the set of all partial
		pairings of \(\{1,\ldots,m\}\), \(\operatorname{supp}\pi\) is the set
		of paired indices, and \(|\pi|\) is the number of pairs. The orientation
		of each contraction is fixed by the original operator order \(r<s\).
	\end{proposition}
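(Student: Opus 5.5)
We start from the differential representation, Eq.~\eqref{eq:environmental-moment-differential}, and integrate by parts. Each $\mathscr D_{\vec{\mathbf{x}}}$ is a first-order operator: a derivative part $i\vec p^{\mathsf T}\partial_{\vec z}$ plus a multiplication part $\frac{i}{2}\vec q^{\mathsf T}J\vec z$. Its formal transpose with respect to $\int \mathrm d^{2d}\vec z$ is therefore
\begin{equation*}
\mathscr D_{\vec{\mathbf{x}}}^{\mathsf T}=-i\vec p_{\vec{\mathbf{x}}}^{\mathsf T}\partial_{\vec z}+\tfrac{i}{2}\vec q_{\vec{\mathbf{x}}}^{\mathsf T}J\vec z .
\end{equation*}
Moving the string $\mathscr D_{\vec{\mathbf{x}}_m}\cdots\mathscr D_{\vec{\mathbf{x}}_1}$ off the delta function reverses the order once more. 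This gives
\begin{equation*}
m(\vec{\mathbf{x}}_1,\ldots,\vec{\mathbf{x}}_m)=\bigl(\mathscr D_{\vec{\mathbf{x}}_1}^{\mathsf T}\cdots\mathscr D_{\vec{\mathbf{x}}_m}^{\mathsf T}\chi\bigr)(\vec 0),
\end{equation*}
so the operators now act in the original operator order, with $\mathscr D^{\mathsf T}_{\vec{\mathbf{x}}_m}$ acting first on $\chi$. Boundary terms vanish in the distributional sense because only finitely many derivatives of $\chi$ at the origin are involved. The hypothesis that moments up to order $m$ exist, equivalently $\chi\in C^m$ near $\vec 0$, makes the pairing of $\delta^{(2d)}$ and its derivatives with $\chi$ legitimate.

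Next, evaluate the composite operator at $\vec z=\vec 0$ by a normal-ordering (Wick-type) argument.

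Write $\mathscr D_k^{\mathsf T}=\partial_k+M_k$, where $\partial_k=-i\vec p_k^{\mathsf T}\partial_{\vec z}$ and $M_k$ is multiplication by the linear function $\frac{i}{2}\vec q_k^{\mathsf T}J\vec z$. These satisfy
\begin{equation*}
[\partial_r,M_s]=-i\cdot\tfrac{i}{2}\vec q_s^{\mathsf T}J\vec p_r=\tfrac12\vec q_s^{\mathsf T}J\vec p_r,
\end{equation*}
which is a scalar. The $\partial$'s commute among themselves, and so do the $M$'s.

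In the product $\prod_{k=1}^m(\partial_k+M_k)$, ordered left to right, we move every $M_s$ to the left of every $\partial_r$ with $r<s$. Each such move produces either the scalar contraction $\frac12\vec q_s^{\mathsf T}J\vec p_r$ or leaves $M_s$ standing further left. Induction on $m$ then gives the standard identity
\begin{equation*}
\prod_k(\partial_k+M_k)=\sum_{\pi}\prod_{\{r<s\}\in\pi}[\partial_r,M_s]\;\cdot\;\Bigl(\prod_{k\in S_M}M_k\Bigr)\Bigl(\prod_{k\in S_\partial}\partial_k\Bigr).
\end{equation*}
Here $\pi$ ranges over partial pairings in which each pair $\{r<s\}$ joins a $\partial$ at the earlier position with an $M$ at the later one. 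The unpaired indices are split into the $M$-set $S_M$ and the $\partial$-set $S_\partial$. Only $r<s$ pairs occur, because an $M_s$ already to the left of $\partial_r$ never needs to be moved past it.

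Evaluating at $\vec z=\vec 0$ kills every term with $S_M\neq\emptyset$, since the $M$'s stand leftmost and vanish at the origin. So all unpaired indices carry derivatives, giving the factor
\begin{equation*}
(-i)^{m-2|\pi|}\prod_{k\notin\operatorname{supp}\pi}\vec p_k^{\mathsf T}\partial_{\vec z}\chi(\vec 0).
\end{equation*}
This is exactly the claimed formula.

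The main obstacle is keeping the orientation and the sign conventions consistent. Two points need care:
\begin{itemize}
\item the double order reversal, first in Eq.~\eqref{eq:ordered-doubled-string} and then in the transposition, must be tracked so that the contraction reads $\vec q_s^{\mathsf T}J\vec p_r$ with $r<s$, and not the opposite;
\item the sign flip of the derivative part under transposition, combined with the factor $i$ in $\mathscr D$, must produce the precise prefactor $(-i)^{m-2|\pi|}$ together with an unsigned $\frac12$ for each contraction.
\end{itemize}
We would verify these conventions on $m=1,2$ before running the induction.
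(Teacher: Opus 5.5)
Your proof is correct and is essentially the paper's argument in transposed form. The paper normal-orders $\mathscr D_{\vec{\mathbf{x}}_m}\cdots\mathscr D_{\vec{\mathbf{x}}_1}\delta^{(2d)}(\vec z)$ on the distribution side (Lemma~\ref{lem:wick-delta}, via $\vec z\,\delta^{(2d)}(\vec z)=0$ and the identity for $z_i\,\partial^{\alpha}_{\vec z}\delta^{(2d)}$) and integrates by parts only at the end. You instead integrate by parts first and Wick-order $\mathscr D^{\mathsf T}_{\vec{\mathbf{x}}_1}\cdots\mathscr D^{\mathsf T}_{\vec{\mathbf{x}}_m}$ acting on $\chi$, discarding the multiplication parts at $\vec z=\vec 0$. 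This has a small bookkeeping advantage: the transposition undoes the reversal in Eq.~\eqref{eq:ordered-doubled-string}, so the orientation $\tfrac12\vec q_{\vec{\mathbf{x}}_s}^{\mathsf T}J\vec p_{\vec{\mathbf{x}}_r}$ with $r<s$ and the factor $(-i)^{m-2|\pi|}$ follow directly from $[\partial_r,M_s]$ and $\partial_k=-i\vec p_{\vec{\mathbf{x}}_k}^{\mathsf T}\partial_{\vec z}$, with no relabeling step.
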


	
	For a Gaussian reference state,
		\begin{equation}
			\chi(\vec{z})
			=
			\exp\left(
			-\frac12\vec{z}^{\mathsf T}\Sigma\vec{z}
			+
			i\vec{z}^{\mathsf T}\vec{\mu}
			\right).
			\label{eq:gaussian-characteristic-function}
		\end{equation}
	Thus all higher field moments are determined by \(\vec{\mu}\) and \(\Sigma\),
	with no independent higher-order state data. Combining this property with
	the ordering contractions gives the following Wick formula.

	\begin{corollary}[Gaussian environmental moments]
		\label{cor:gaussian-environmental-moments}
		Let the reference state be Gaussian, with mean and covariance defined
		above.
		For \(r<s\), define
		\begin{align}
			\mu(\vec{\mathbf{x}}_r)
			&=
			\vec{p}_{\vec{\mathbf{x}}_r}^{\mathsf T}\vec{\mu},
			\\
			\kappa(\vec{\mathbf{x}}_r,\vec{\mathbf{x}}_s)
			&=
			\vec{p}_{\vec{\mathbf{x}}_s}^{\mathsf T}\Sigma \vec{p}_{\vec{\mathbf{x}}_r}
			+
			\frac12\vec{q}_{\vec{\mathbf{x}}_s}^{\mathsf T}J\vec{p}_{\vec{\mathbf{x}}_r}.
		\end{align}
		Then
		\begin{equation}
			\begin{aligned}
			&m(\vec{\mathbf{x}}_1,\ldots,\vec{\mathbf{x}}_m)
			\\
			&=
			\sum_{\pi\in\operatorname{Part}_{1,2}(m)}
			\left(
			\prod_{\{r\}\in\pi}\mu(\vec{\mathbf{x}}_r)
			\right)
			\left(
			\prod_{\substack{\{r,s\}\in\pi\\r<s}}
			\kappa(\vec{\mathbf{x}}_r,\vec{\mathbf{x}}_s)
			\right),
		\end{aligned}
		\label{eq:environmental-wick}
		\end{equation}
		where \(\operatorname{Part}_{1,2}(m)\) is the set of all partitions of
		\(\{1,\ldots,m\}\) into singletons and pairs.

		For a centered Gaussian environment, moments of an odd number of
		linear forms vanish. For an even number, only complete pairings
		contribute to the sum.
	\end{corollary}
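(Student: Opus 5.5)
The plan is to derive the corollary directly from Proposition~\ref{prop:environmental-moments}: insert the Gaussian characteristic function~\eqref{eq:gaussian-characteristic-function} and evaluate the remaining derivatives at the origin. The proposition already gives
\[
m(\vec{\mathbf{x}}_1,\ldots,\vec{\mathbf{x}}_m)
=\sum_{\pi\in\operatorname{Pair}_{\leq}(m)}
\Bigl(\prod_{\{r,s\}\in\pi}\tfrac12\vec q_{\vec{\mathbf x}_s}^{\mathsf T}J\vec p_{\vec{\mathbf x}_r}\Bigr)
(-i)^{m-2|\pi|}
\Bigl(\prod_{k\notin\operatorname{supp}\pi}\vec p_{\vec{\mathbf x}_k}^{\mathsf T}\partial_{\vec z}\Bigr)\chi\Big|_{\vec z=0}.
\]
The task therefore reduces to computing the mixed directional derivatives of a Gaussian at zero. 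Since these derivatives commute, the unpaired factors carry no ordering information.

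\textbf{Step 1: Gaussian derivatives.} For the set $S$ of unpaired indices, I would show
\[
(-i)^{|S|}\Bigl(\prod_{k\in S}\vec p_k^{\mathsf T}\partial_{\vec z}\Bigr)\chi\Big|_0
=\sum_{\sigma\in\operatorname{Part}_{1,2}(S)}\prod_{\{r\}\in\sigma}\vec p_r^{\mathsf T}\vec\mu\prod_{\{r,s\}\in\sigma}\vec p_s^{\mathsf T}\Sigma\vec p_r .
\]
This is the classical Isserlis formula. The cleanest route is to note that the left side is the coefficient of $t_1\cdots t_{|S|}$ in $\chi(-i\sum_k t_k\vec p_k)$. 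This equals $\exp\bigl(\tfrac12\sum_{r,s}t_rt_s\,\vec p_r^{\mathsf T}\Sigma\vec p_s+\sum_r t_r\vec p_r^{\mathsf T}\vec\mu\bigr)$: the quadratic term gives $-\tfrac12(-i)^2=+\tfrac12$, and the linear term gives $i\cdot(-i)=1$.

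Expanding the exponential and extracting the multilinear coefficient then yields exactly one term per partition into singletons and pairs. The factor $\tfrac12$ cancels against the two orderings of each pair, using the symmetry of $\Sigma$, which I take as part of the Gaussian parametrization. One point needs care: $\partial_{\vec z}$ consists of Wirtinger derivatives, whereas the shift by $t\vec p$ treats $\vec z$ as $2d$ independent complex variables. Because $\chi$ is given as an analytic expression in the components of $\vec z$, the Wirtinger derivatives agree with holomorphic partial derivatives in the $2d$ independent variables, and the substitution is legitimate.

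\textbf{Step 2: Recombination.} Substituting Step~1 into the proposition produces a double sum. The outer sum runs over a partial pairing $\pi$ whose pairs carry the ordering weight $\tfrac12\vec q_s^{\mathsf T}J\vec p_r$. The inner sum runs over a partition $\sigma$ of the complement into singletons and $\Sigma$-pairs. The map $(\pi,\sigma)\mapsto$ (the set of all blocks of $\pi\cup\sigma$) sends these choices onto $\operatorname{Part}_{1,2}(m)$. Each pair $\{r,s\}$ with $r<s$ in the final partition arises exactly twice: once from $\pi$, with weight $\tfrac12\vec q_s^{\mathsf T}J\vec p_r$, and once from $\sigma$, with weight $\vec p_s^{\mathsf T}\Sigma\vec p_r$. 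Expanding the product of $\kappa$'s therefore reproduces the double sum exactly, and each singleton contributes $\mu(\vec{\mathbf x}_r)$. This bookkeeping bijection is the main point to check carefully, in particular that the orientation $r<s$ is the same in both contraction types. It is, since the proposition fixes the orientation by operator order and the $\Sigma$-term is symmetric.

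\textbf{Step 3: Centered case.} If $\vec\mu=0$, every partition containing a singleton vanishes. For odd $m$, every partition in $\operatorname{Part}_{1,2}(m)$ has a singleton, so the moment is zero. For even $m$, only perfect pairings survive. I expect the main obstacle to be the careful justification of Step~1: the precise sign and factor conventions with Wirtinger derivatives and the symmetry of $\Sigma$. Step~2 is purely combinatorial.
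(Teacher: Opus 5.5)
Your proposal is correct and follows the paper's own argument. Apply Proposition~\ref{prop:environmental-moments}, evaluate the Gaussian derivatives as singleton weights \(\vec p^{\mathsf T}\vec\mu\) and pair weights \(\vec p_s^{\mathsf T}\Sigma\vec p_r\), and then, for each pair \(r<s\), sum the two mutually exclusive alternatives (ordering contraction or \(\Sigma\)-contraction) to obtain \(\kappa(\vec{\mathbf x}_r,\vec{\mathbf x}_s)\). Your generating-function derivation of the Isserlis step, your justification of the Wirtinger substitution, and your remark that \(\Sigma\) must be taken symmetric (which the paper assumes tacitly) fill in details that the paper's short proof leaves implicit.
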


	As expected, for a Gaussian initial state, all field moments required
	at a given perturbative order can be expressed in terms of the first
	and second field moments. The proofs are given in
	Appendix~\ref{app:wick}. To evaluate multitime correlations using
	these formulas, one replaces each \(\vec{\mathbf{x}}_r\) by the
	corresponding vector \(\vec{\mathbf{x}}_r(t_r)\) propagated under
	the free environmental dynamics.

	\section{\label{sec:single-mode-worked-example}A single damped mode in practice}

	We now combine the two parts of the method: propagation of the
	interaction coefficients in the doubled space and evaluation of the
	resulting traces through the initial characteristic function. Consider
	a single bosonic mode with free evolution and amplitude damping,
	\begin{equation}
		\begin{gathered}
		\mathcal L_0^B=-i[\omega_B\hat a^\dagger\hat a ,\bullet]+\gamma\left(
		\hat a\bullet\hat a^\dagger
		-\frac12\{\hat a^\dagger\hat a,\bullet\}
		\right),
		\\
		\omega_B>0,\qquad\gamma\geq0.
		\end{gathered}
	\label{eq:single-mode-generator}
	\end{equation}
	where \(\{\hat X,\hat Y\}=\hat X\hat Y+\hat Y\hat X\).

	The one-mode matrices and operator vectors are
	\begin{equation}
		\mathfrak a=\begin{pmatrix}\hat a\\\hat a^\dagger\end{pmatrix},
		\qquad
		J=\begin{pmatrix}0&-1\\1&0\end{pmatrix},
		\qquad
		E=\begin{pmatrix}0&1\\1&0\end{pmatrix},
	\end{equation}

	The doubled vector is
	\begin{equation}
		\mathfrak A=
		\begin{pmatrix}
			(\hat a\bullet)\\
			(\hat a^\dagger\bullet)\\
			(\bullet\hat a^\dagger)\\
			(\bullet\hat a)
		\end{pmatrix}
	\label{eq:single-mode-doubled-vector}
	\end{equation}

	The generator in Eq.~\eqref{eq:single-mode-generator}
	corresponds to
	\begin{equation}
		H_B=\begin{pmatrix}0&\omega_B\\\omega_B&0\end{pmatrix},
		\qquad
		\Gamma=\begin{pmatrix}0&\gamma\\0&0\end{pmatrix},
	\end{equation}
	Substitution into Eq.~\eqref{eq:doubled-liouvillian-matrix} gives
	\begin{equation}
		\mathbf{L}=\begin{pmatrix}
			0&-\frac{\gamma}{2}-i\omega_B&\gamma&0\\
			-\frac{\gamma}{2}-i\omega_B&0&0&0\\
			\gamma&0&0&-\frac{\gamma}{2}+i\omega_B\\
			0&0&-\frac{\gamma}{2}+i\omega_B&0
		\end{pmatrix}.
		\label{eq:single-mode-liouvillian-matrix}
	\end{equation}

	The doubled propagator in Eq.~\eqref{eq:doubled-linear-propagator}
	places all time dependence in the coefficients of the interaction
	factors. The reduction propositions then express their traces as
	derivatives of the initial characteristic function at the origin.
	Thus the same propagated coefficients can be used for different
	initial states by changing only these derivatives. Damping mixes
	left and right multiplication in the doubled space, and these mixed contributions are retained automatically when
	evaluating products of the evolved superoperators.
	The explicit propagator and the substitutions into the reduction
	formulas are given in Appendix~\ref{app:single-mode-worked-example}.

	\subsection{Moments for linear interaction factors}

	For the first two components of the doubled vector and
	\(0\leq t_1\leq t\), applying
	Proposition~\ref{prop:linear-product-reduction} to their propagated
	coefficients gives
	\begin{align}
		&\operatorname{tr}_B[\mathfrak A_1(t)\mathfrak A_2(t_1)\hat\rho_B]
		\nonumber\\
		&=e^{-(\gamma/2+i\omega_B)(t-t_1)}
		\left[-e^{-\gamma t_1}
		\left.\partial_z\partial_{z^*}\chi\right|_{\vec z=\vec{0}}
		+1-\frac12e^{-\gamma t_1}\right].
		\label{eq:single-mode-linear-two-time}
	\end{align}
	Here \(\chi(\vec z)\) is the characteristic function of the initial
	state, with \(\vec z=(z,z^*)^{\mathsf T}\). The state dependence of
	this correlation is determined by a second derivative of \(\chi\).
	The factor depending on \(t-t_1\) describes the damped oscillation
	between the two interactions, while the dependence on \(t_1\)
	accounts for relaxation before the earlier interaction. Consequently,
	for a general initial state the correlation depends on both times,
	not only on their difference.
	At equal times it gives
	\(e^{-\gamma t}\operatorname{tr}_B[\hat a^\dagger\hat a\hat\rho_B]+1\),
	including the vacuum contribution fixed by operator ordering.

	\subsection{Moments for quadratic interaction factors}

	For quadratic coupling, the same matrix propagation is followed by
	Proposition~\ref{prop:quadratic-product-reduction}. Applied to
	\((\hat a^2\bullet)\) and \(((\hat a^\dagger)^2\bullet)\), it gives
	\begin{align}
		&\operatorname{tr}_B[\mathfrak A_1(t)^2\mathfrak A_2(t_1)^2\hat\rho_B]
		\nonumber\\
		&=e^{-(\gamma+2i\omega_B)(t-t_1)}
		\Bigl[e^{-2\gamma t_1}
		\left.\partial_z^2\partial_{z^*}^{\,2}\chi\right|_{\vec z=\vec{0}}
		\nonumber\\
		&+2e^{-\gamma t_1}(e^{-\gamma t_1}-2)
		\left.\partial_z\partial_{z^*}\chi\right|_{\vec z=\vec{0}}
		+\frac12(e^{-\gamma t_1}-2)^2\Bigr].
		\label{eq:single-mode-quadratic-two-time}
	\end{align}
	The fourth derivative describes the fourth-degree field moment,
	while the second derivative and constant are the corrections due to
	operator ordering. All three terms follow from the same reduction
	formula.

	A single quadratic factor involves derivatives up to second order of the
	characteristic function, as stated in
	Proposition~\ref{prop:quadratic-reduction}. Its product with a second
	quadratic factor also probes fourth derivatives. Thus states with
	identical first and second field moments can yield the same
	first-order reduced evolution but different second-order environmental
	correlations. This is the distinction examined in the two-photon
	Rabi model below. The remaining single-factor and two-factor moments,
	including the right-multiplication channels, are collected in
	Appendix~\ref{app:single-mode-worked-example}.

	\section{\label{sec:two-photon-rabi-model}Environmental statistics in the two-photon Rabi model}
	We now apply the method developed above to a qubit coupled quadratically
	to a damped bosonic mode. The single-mode results of the preceding
	section provide the environmental moments needed to construct its
	reduced dynamics, combining the two-photon interaction with dissipative
	evolution of the mode.
	
	The Hamiltonian of the composite system is that of the two-photon asymmetric Rabi model.
	Two-photon Rabi interactions admit circuit-QED implementations and lead
	to effects such as two-photon blockade and spectral
	collapse~\cite{FelicettiRossattoRicoSolanoFornDiaz2018}.
	The large-eigenvalue asymptotics of the asymmetric model was studied
	in Ref.~\cite{BoutetDeMonvelCharifZielinski2023}. The two-photon spectrum
	undergoes spectral collapse when the coupling reaches one half of the
	bosonic-mode frequency. Including damping allows us to examine how
	these nonlinear interactions transmit the influence of environmental
	dissipation to the qubit. More generally, dissipatively coupled
	oscillator models show how a mode without direct bath coupling can
	acquire effective pumping and relaxation channels through another
	damped mode~\cite{TrubilkoBasharov2020}.
	
	The two-photon interaction also connects this model to the broader
	study of effective higher-order light--matter processes. Effective
	second-order processes have been studied in quantum-stochastic
	descriptions of Stark-modified spontaneous
	emission~\cite{Basharov2011} and collective oscillator
	radiation~\cite{BasharovTrubilko2021}. Non-Wiener corrections have
	also been incorporated into a kinetic equation for an atomic system
	driven by a broadband one-photon
	packet~\cite{TrubilkoBasharov2018}. These works provide related
	physical context, although their methods and perturbative ordering
	differ from those used here. In particular, ``second order'' in
	those works refers to the order in the microscopic coupling used
	to construct an effective interaction. Below, it refers to the
	second-order term of the TCL expansion for the specified
	two-photon interaction.
	
	The model therefore provides a direct
	application of our method to the question of how environmental
	statistics beyond the first two moments affect reduced dynamics.
	Our main comparison is between an initial Fock state and a thermal
	state with the same mean occupation. Their mean vectors and
	covariance matrices coincide, but their fourth moments differ for
	nonzero occupation. 
	
	To make this comparison explicit, we specialize
	Eq.~\eqref{eq:composite-master-equation} to the dissipative
	two-photon asymmetric Rabi model. All model parameters are real: \(\Delta\) denotes the qubit splitting, \(\epsilon\) the bias, \(g\) the coupling constant, \(\omega_B>0\) the environmental mode frequency, and \(\gamma\geq0\) the damping rate.
	The system free generator is
	\begin{equation}
		\mathcal L_0^A
		=
		-i
		\left[
		\frac12
		\left(
		\Delta\hat\sigma_z
		+
		\epsilon\hat\sigma_x
		\right),
		\bullet
		\right],
	\end{equation}
	and the environmental free generator \(\mathcal L_0^B\) is given by
	Eq.~\eqref{eq:single-mode-generator}. We use the one-mode operator
	vectors and generator matrices specified in
	Sec.~\ref{sec:single-mode-worked-example}.
	The interaction generator is
	\begin{equation}
		\mathcal V
		=
		-i
		\left[
		g\hat\sigma_x
		\otimes
		\left(
		\hat a^2+(\hat a^\dagger)^2
		\right),
		\bullet
		\right].
	\end{equation}
	
		The channel decomposition of the interaction is therefore
	\begin{equation}
		\mathcal V
		=
		\mathcal V_1^A\otimes\mathcal V_1^B
		+
		\mathcal V_2^A\otimes\mathcal V_2^B.
	\end{equation}
	
		The corresponding system channels are
	\begin{equation}
		\mathcal V_1^A
		=
		-ig(\hat\sigma_x\,\bullet),
		\qquad
		\mathcal V_2^A
		=
		ig(\bullet\,\hat\sigma_x).
	\end{equation}

	The two environmental interaction matrices are
	\begin{equation}
		\mathbf{V}_1
		=
		\begin{pmatrix}
			I_2&0\\
			0&0
		\end{pmatrix},
		\qquad
		\mathbf{V}_2
		=
		\begin{pmatrix}
			0&0\\
			0&I_2
		\end{pmatrix}.
	\end{equation}
	They define left and right multiplication by the two-photon operator:
	\begin{align}
		\mathcal V_1^B
		&
		=
		\left(\hat a^2+(\hat a^\dagger)^2\right)\bullet
		=
		\mathfrak A^{\mathsf T}\mathbf{V}_1\mathfrak A,
		\\
		\mathcal V_2^B
		&=\bullet\left(\hat a^2+(\hat a^\dagger)^2\right)=
		\mathfrak A^{\mathsf T}\mathbf{V}_2\mathfrak A.
	\end{align}

\subsection{First-order reduced evolution}

The free qubit Hamiltonian
\(\hat H_A=(\Delta\hat\sigma_z+\epsilon\hat\sigma_x)/2\)
determines the interaction-picture operator
\begin{align}
 \hat S(t)
 &=e^{i\hat H_A t}\hat\sigma_x e^{-i\hat H_A t}
 \nonumber\\
 &=\left(\frac{\epsilon^2}{\Omega^2}
 +\frac{\Delta^2}{\Omega^2}\cos\Omega t\right)\hat\sigma_x
 \nonumber\\
 &\quad-\frac{\Delta}{\Omega}\sin\Omega t\,\hat\sigma_y
 +\frac{\epsilon\Delta}{\Omega^2}(1-\cos\Omega t)\hat\sigma_z,
 \label{eq:rabi-system-operator}
\end{align}
where \(\Omega=\sqrt{\Delta^2+\epsilon^2}\). To obtain Eq.~\eqref{eq:rabi-system-operator}, use \(\hat H_A^2=\Omega^2\hat I_A/4\) to write
\begin{equation*}
 e^{i\hat H_A t}=\cos\frac{\Omega t}{2}\,\hat I_A
 +i\frac{\sin(\Omega t/2)}{\Omega}
 (\Delta\hat\sigma_z+\epsilon\hat\sigma_x).
\end{equation*}
Multiplication on both sides of \(\hat\sigma_x\), using \(\hat\sigma_\alpha\hat\sigma_\beta=\delta_{\alpha\beta}\hat I_A+i\sum_\nu\varepsilon_{\alpha\beta\nu}\hat\sigma_\nu\), gives the displayed coefficients. At \(\Omega=0\), the expression is understood by continuity and \(\hat S(t)=\hat\sigma_x\). The interaction-picture system channels are then
\begin{equation}
 \mathcal V_1^A(t)=-ig\bigl(\hat S(t)\,\bullet\bigr),
 \qquad
 \mathcal V_2^A(t)=ig\bigl(\bullet\,\hat S(t)\bigr).
 \label{eq:rabi-system-channels}
\end{equation}

The environmental factor at first order is the one-time correlation
\begin{equation}
 m_j(t)=\operatorname{tr}_B\!\left[
 \mathcal V_j^B\mathcal T_t\hat\rho_B\right],\qquad j=1,2.
\end{equation}
The \(N=1\) case of Eq.~\eqref{eq:environmental-multitime-moment-interaction}
rewrites it as
\begin{equation}
 m_j(t)=\operatorname{tr}_B\!\left[\mathcal V_j^B(t)\hat\rho_B\right].
\end{equation}
For the channels considered here,
\(\mathcal V_j^B=\mathfrak A^{\mathsf T}\mathbf V_j\mathfrak A\).
The similarity transformation preserves products of multiplication
superoperators, and the doubled evolution therefore gives
\begin{equation}
 \mathcal V_j^B(t)=\mathfrak A^{\mathsf T}\mathbf V_j(t)\mathfrak A,
 \qquad
 \mathbf V_j(t)=\mathbf G(t)^{\mathsf T}\mathbf V_j\mathbf G(t).
\end{equation}
The environmental average is consequently the expectation value
of a quadratic form in the reference state. Applying
Proposition~\ref{prop:quadratic-reduction} with
\(\mathbf X=\mathbf V_j(t)\) gives the same result for both channels,
\(m_1(t)=m_2(t)\equiv m(t)\). Their equality also follows from
cyclicity of the trace, since the channels are left and right
multiplication by the same operator. The term without derivatives
vanishes for these channels, so the average depends only on
second derivatives of the characteristic function,
\begin{equation}
 m(t)=-e^{-\gamma t}\left.
 \left(e^{-2i\omega_Bt}\partial_z^2
 +e^{2i\omega_Bt}\partial_{z^*}^{\,2}\right)
 \chi(z,z^*)\right|_{z=z^*=0}.
 \label{eq:first-order-characteristic-average}
\end{equation}
See Appendix~\ref{app:single-mode-worked-example} for details.
Truncating the reduced evolution equation at first order gives
\begin{equation}
 \frac{d}{dt}\hat\rho_A^I(t)
 =-i\lambda g\,m(t)\bigl[\hat S(t),\hat\rho_A^I(t)\bigr].
\end{equation}
This average can also be expressed through the mean vector \(\vec{\mu}\)
and covariance matrix \(\Sigma\):
\begin{equation}
 m(t)=2e^{-\gamma t}\operatorname{Re}\!\left[
 (\mu_1^2+\Sigma_{11})e^{-2i\omega_Bt}\right].
 \label{eq:gaussian-first-order-average}
\end{equation}
The first-order generator therefore cannot distinguish states with identical
first and second field moments. Both the Fock and thermal states considered
below have \(\vec{\mu}=0\) and \(\Sigma_{11}=0\), giving \(m(t)=0\)
and \(\mathcal K_1(t)=0\). Note also that a centered Gaussian state with
\(\Sigma_{11}\neq0\) can give a nonzero first-order contribution, so
Gaussianity alone does not imply that the first-order term vanishes.

We omit the details for subsystem \(A\), since the evaluation of
its contributions is straightforward after vectorizing density
operators by stacking their columns. In this convention,
\begin{align}
	\operatorname{vec}\!\left[\mathcal V_1^A(t)\hat\rho_A\right]
	&=-ig\left[I_2\otimes\hat S(t)\right]\operatorname{vec}(\hat\rho_A),\\
	\operatorname{vec}\!\left[\mathcal V_2^A(t)\hat\rho_A\right]
	&=ig\left[\hat S(t)^{\mathsf T}\otimes I_2\right]\operatorname{vec}(\hat\rho_A).
\end{align}
Compositions of the system channels then become ordinary products
of \(4\times4\) matrices in the same order.

\subsection{Two-photon correlations at second order}

We now consider the second-order environmental averages
\begin{equation}
	m_{l k} ( t , t_1 )
	=
	\operatorname{tr}_B
	\Bigl[
	\mathcal V_l^B
	\mathcal T_{t - t_1}
	\mathcal V_k^B
	\mathcal T_{t_1}
	\hat\rho_B
	\Bigr],
\end{equation}
where \(l,k=1,2\).

Since the trace of left and right multiplication by the final two-photon
operator is the same, \(m_{1k}(t,t_1)=m_{2k}(t,t_1)\).
Each correlation is a sum of the quadratic-factor moments in
Eqs.~\eqref{eq:single-mode-quadratic-two-time}
and~\eqref{eq:single-mode-quadratic-other-moments}. The resulting
expression in derivatives of \(\chi\) is given in
Eq.~\eqref{eq:two-photon-second-order-moments} of
Appendix~\ref{app:rabi-environmental-moments}. It contains derivatives
of second and fourth order, together with the constant ordering terms.
The earlier channel \(k\) determines the ordering-dependent part.

\subsection{\label{sec:matched-fock-thermal-states}Identical covariances, different reduced dynamics}

\emph{Fock state.}
For \(n_B\in\mathbb N_0\) and
\(\hat\rho_B=\lvert n_B\rangle\langle n_B\rvert\),
\begin{equation}
	\chi_{n_B}(z,z^*)
	=
	e^{-zz^*/2}
	L_{n_B}(zz^*),
\end{equation}
where \(L_{n_B}\) is the Laguerre polynomial.

\emph{Thermal state.}
For a thermal state with mean occupation \(n_{\mathrm{th}}\geq0\),
\begin{equation}
	\Sigma_{11}=\Sigma_{22}=0,
	\qquad
	\Sigma_{12}=n_{\mathrm{th}}+\frac12,
	\qquad
	\vec{\mu}=\vec{0},
\end{equation}
and hence
\begin{equation}
	\chi_{\mathrm{th}}(z,z^*)
	=
	\exp\left[
	-\left(n_{\mathrm{th}}+\frac12\right)zz^*
	\right].
\end{equation}

\emph{Comparison at equal mean occupation.}
Set \(n_{\mathrm{th}}=n_B\). Both states then have
\begin{equation}
	\vec{\mu}=0,
	\qquad
	\Sigma=
	\begin{pmatrix}
		0&n_B+\frac12\\
		n_B+\frac12&0
	\end{pmatrix},
\end{equation}
The required mixed derivatives
are obtained in Appendix~\ref{app:matched-fourth-derivatives} from the
arbitrary-order formulas for the two classes of states.
The contribution responsible for the difference in the reduced
dynamics is

\begin{equation}
	\left.
	\partial_z^2\partial_{z^*}^{\,2}
	\bigl(\chi_{\mathrm{th}}-\chi_{n_B}\bigr)
	\right|_{\vec{z}=\vec{0}}
	=n_B(n_B+1).
	\label{eq:fock-thermal-fourth-difference}
\end{equation}
At \(n_B=0\), both states reduce to the vacuum, so the difference
vanishes. For nonzero occupation, the thermal state remains Gaussian,
although it is not stationary under the zero-temperature damping
considered here.

The full expression for \(m_{lk}(t,t_1)\) is lengthy and is given
in Appendix~\ref{app:two-photon-rabi-details},
Eq.~\eqref{eq:two-photon-second-order-moments}.
However, the difference between the thermal and Fock correlations
takes a simple form. Substituting
Eq.~\eqref{eq:fock-thermal-fourth-difference} into that expression
gives, for \(0\leq t_1\leq t\) and any \(l,k\in\{1,2\}\),
\begin{equation}
 \begin{aligned}
 &m_{lk}^{\mathrm{th}}(t,t_1)-m_{lk}^{\mathrm F}(t,t_1)
 \\
 &\quad=2n_B(n_B+1)e^{-\gamma(t+t_1)}
 \cos\!\left[2\omega_B(t-t_1)\right].
 \end{aligned}
 \label{eq:fock-thermal-correlation-difference}
\end{equation}
The superscripts specify the initial thermal or Fock state. The difference is real and independent of the channels, so the imaginary parts of the corresponding correlation functions coincide. Denote the common difference in Eq.~\eqref{eq:fock-thermal-correlation-difference} by \(\delta m(t,t_1)\). Since the first-order generators vanish and the sum of the system channels in Eq.~\eqref{eq:rabi-system-channels} is \(-ig[\hat S(t),\bullet]\), substitution into the second-order expression in Sec.~\ref{sec:moment-functions} gives
\begin{equation}
 \begin{aligned}
 &\bigl(\mathcal K_2^{\mathrm{th}}(t)-\mathcal K_2^{\mathrm F}(t)\bigr)\hat\rho_A
 \\
 &\quad=-g^2\int_0^t dt_1\,\delta m(t,t_1)
 [\hat S(t),[\hat S(t_1),\hat\rho_A]].
 \end{aligned}
 \label{eq:fock-thermal-generator-difference}
\end{equation}
Although the thermal state has larger initial fluctuations of the
quadratic interaction operator, the oscillatory time dependence
of \(\delta m(t,t_1)\) does not necessarily lead to stronger damping
at every time in the reduced dynamics.

\raggedbottom
\section{\label{sec:numerical-comparison}Fock and thermal environments: dynamics and accuracy}

We now compare the TCL2 and TCL4 approximations with the reduced numerical solution of the composite master equation for the initial Fock and thermal states considered above on \(0\leq t\leq2/\gamma\).

For both states, all odd perturbative orders vanish.
Indeed, the phase rotation
\(\hat U_B=e^{i\pi\hat a^\dagger\hat a/2}\) leaves each state invariant,
commutes with the free environmental evolution under conjugation, and
changes the sign of \(\hat a^2+(\hat a^\dagger)^2\). Every ordered multitime environmental correlation function
with an odd number of interaction factors is therefore zero.
Consequently, the moments and the recurrence
\eqref{eq:cumulant-recurrence} give
\(\mathcal M_{2r+1}(t)=\mathcal K_{2r+1}(t)=0\) for \(r\geq0\).
In particular,
\begin{align}
	\mathcal K_2(t)&=\dot{\mathcal M}_2(t),
	\\
	\mathcal K_4(t)&=\dot{\mathcal M}_4(t)
	-\dot{\mathcal M}_2(t)\mathcal M_2(t).
	\label{eq:matched-even-tcl-generators}
\end{align}

The TCL2 and TCL4 approximations used below are obtained by truncating
the generator in Eq.~\eqref{eq:matched-even-tcl-generators} at second
and fourth order. The TCL2 approximation solves
\begin{equation*}
	\frac{d}{dt}\hat\rho_A^I(t)
	=\lambda^2\mathcal K_2(t)\hat\rho_A^I(t),
\end{equation*}
while TCL4 includes the fourth-order contribution and solves
\begin{equation*}
	\frac{d}{dt}\hat\rho_A^I(t)
	=\left[\lambda^2\mathcal K_2(t)+\lambda^4\mathcal K_4(t)\right]
	\hat\rho_A^I(t).
\end{equation*}

The composite master equation is solved numerically using QuTiP, and the qubit state is obtained by tracing out the environmental mode. The reduced TCL2 and TCL4 generators are constructed algebraically in Wolfram Mathematica, which is also used to solve the resulting evolution equations numerically.

To implement the differential representation, we identify derivatives of
the delta function with monomials in auxiliary variables
\(\vec u=(u_1,u_2)^{\mathsf T}\):
\begin{equation*}
	i^{p+q}\partial_z^p\partial_{z^*}^{\,q}\delta^{(2)}(\vec z)
	\quad\longleftrightarrow\quad u_1^p u_2^q.
\end{equation*}
Under this identification, the differential operator in
Eq.~\eqref{eq:def-D-script-x} is represented by
\begin{equation*}
	\mathscr L
	=\vec p^{\mathsf T}\vec u
	+\frac12\vec q^{\mathsf T}J\partial_{\vec u},
\end{equation*}
where \(\vec p\) and \(\vec q\) are the coefficient combinations
defined above for the corresponding linear form.

In a finite monomial basis whose first element \(1\) corresponds to
the delta function, multiplication and differentiation are represented
by matrices. Denoting the coefficient vector of the first monomial by
\(\overrightarrow{\mathrm{mon}}_1=(1,0,\ldots,0)^{\mathsf T}\) and the channel
matrices by \(C_i(t)\), we obtain the ordered correlation as
\begin{equation}
	m_{i_n\dots i_1}(t_n,\ldots,t_1)
	=\vec m_\chi^{\mathsf T}
	C_{i_1}(t_1)\cdots C_{i_n}(t_n)\overrightarrow{\mathrm{mon}}_1,
	\label{eq:environmental-correlation-matrix-product}
\end{equation}
where \(0\leq t_1\leq\cdots\leq t_n\), and \(\vec m_\chi\) contains
the reference-state moments in the same monomial basis,
\begin{equation*}
	(\vec m_\chi)_{pq}
	=(-i)^{p+q}\left.
	\partial_z^p\partial_{z^*}^{\,q}\chi(\vec z)
	\right|_{\vec z=\vec 0},
\end{equation*}
where the components of \(\vec m_\chi\) are indexed by the same
exponent pairs \((p,q)\) as the monomial basis and contain the
corresponding reference-state moments,

Combining these correlations with the system-channel matrices gives a
finite sum of exponentials in the ordered time variables. The repeated
integrals are evaluated using the triangular-matrix exponential formula
given in Ref.~\cite{CarbonellJimenezPedroso2008}.
For each vector of exponential rates \(\vec\nu=(\nu_1,\ldots,\nu_k)\),
\begin{equation}
	\int_{0<t_1<\cdots<t_k<t}
	e^{\sum_{j=1}^k\nu_jt_j}\,dt_1\cdots dt_k
	=\left[e^{tF_{\vec\nu}}\right]_{1,k+1},
	\label{eq:ordered-integral-matrix-exponential}
\end{equation}
where \(F_{\vec\nu}\) is upper bidiagonal, with diagonal entries
\(0,\sum_{j=1}^k\nu_j,\sum_{j=2}^k\nu_j,\ldots,\nu_k\)
and ones on the first superdiagonal. This formula also handles coincident
diagonal entries, yielding the corresponding polynomial factors in time
without singular denominators. Further implementation details and
reproduction scripts are provided in Ref.~\cite{QuadraticBathTCLCode}.

To choose the coupling strength for these comparisons, we use a diagnostic motivated by weak-coupling estimates. For Gaussian environmental correlations, as arise from linear coupling to a Gaussian bosonic environment, error bounds for the Redfield, Davies, and coarse-grained master equations are derived in Ref.~\cite{MozgunovLidar2020} under the assumptions stated there. A central small parameter is $\tau_B/\tau_{SB}\ll1$, where $\tau_B$ is the bath correlation time and $\tau_{SB}$ is the characteristic system relaxation timescale, both defined through integrals of the bath correlation function. For an exponentially decaying correlation, this ratio scales as the squared product of the effective coupling amplitude and the correlation time.
Explicit bounds on the residual term after finite-order Born and Markov
approximations for Gaussian environments are obtained in
Ref.~\cite{AgerskovNathan2026}. Under suitable bounds on the time moments of the
correlation matrices, an appropriate finite truncation order, increasing
as the coupling decreases, yields a residual smaller than any fixed
power of the weak-coupling parameter.

Comparisons of second- and fourth-order contributions also provide practical weak-coupling criteria for systems linearly coupled to thermal bosonic baths~\cite{FruchtmanLambertGauger2016}. For the present nonstationary environment, quadratic interaction factors need not have Gaussian statistics even when the initial field state is Gaussian, so those criteria do not apply directly. We use the coupling-amplitude--memory-time scaling to motivate a numerical diagnostic.

To select the coupling strength for a simulation interval $[0,T]$ with
$\gamma>0$, we use the Fock-based diagnostic
\begin{equation}
 \eta_{\mathrm{num}}(T)
 =|\lambda g|\sqrt{2(n_B^2+n_B+1)}
   \frac{1-e^{-\gamma T}}{\gamma}.
 \label{eq:numerical-coupling-diagnostic}
\end{equation}
The condition $\eta_{\mathrm{num}}(T)\ll1$ guides parameter selection but does not bound the TCL truncation error, since it does not control higher connected environmental correlations. We assess accuracy against the numerical solution of the composite master equation. The envelope estimate and its zero-damping limit are derived in Appendix~\ref{app:rabi-coupling-diagnostic}.

\def\articleMainEta{0.5}
\def\articleMainG{0.000567029}
\def\articleEtaValues{0.35,0.4,0.45,0.5,0.55}

For the main illustrations, we set
\(\eta\equiv\eta_{\mathrm{num}}(2/\gamma)=\articleMainEta\)
using the diagnostic in Eq.~\eqref{eq:numerical-coupling-diagnostic}.
This deliberately less conservative choice makes the differences
between the numerical solution, TCL2, and TCL4 distinguishable
in the plots. The same coupling strength is used for both
environmental states.

\paragraph{Common parameters.}\label{par:article-parameters}
Parameters: $\omega_B=1$ (frequency unit), $\Delta=1.985$, $\epsilon=0$, $\gamma=0.005$ and $\lambda=1$. Initially $\rho_{AB}=\rho_A\otimes\rho_B$, $\rho_A=[I+r_x\sigma_x+r_y\sigma_y+r_z\sigma_z]/2$, $(r_x,r_y,r_z)=(0.6,0,0.8)$. The initial mode states are Fock, $|n_B\rangle\langle n_B|$, and undisplaced, unsqueezed thermal, with $n_B=\bar n=3$ . Mode damping is at zero temperature. The observation interval is $0\leq\gamma t\leq2$, with 1000 shared output times. The numerical oscillator cutoffs are 40 (Fock) and 73 (thermal).

For each environment and coupling \(\eta\), we quantify accuracy by the trace distance from the reduced numerical solution and its maximum over the observation interval:
\begin{equation*}
 \begin{aligned}
 D_k(t)&=\frac12\bigl\|\hat\rho_{\mathrm{num}}^I(t)-\hat\rho_{\mathrm{TCL}k}^I(t)\bigr\|_1,\\
 E_k(\eta)&=\max_{0\leq t\leq2/\gamma}D_k(t),\qquad k=2,4.
 \end{aligned}
\end{equation*}

For all displayed TCL2 and TCL4 trajectories, positivity, trace
normalization, and Hermiticity were preserved within numerical
precision at the sampled times. These checks concern the chosen
initial state and do not establish complete positivity of the
approximate dynamical maps.

Throughout the numerical figures, the notation
\begin{equation}
 \langle\sigma_\alpha(t)\rangle
 :=\operatorname{Tr}_A\!\Bigl[\hat\rho_A^I(t)\hat\sigma_\alpha\Bigr],
 \qquad \alpha=x,y,z,
 \label{eq:plotted-pauli-means}
\end{equation}
denotes Pauli expectation values in the interaction picture. Here
$\hat\rho_A^I(t)=e^{i\hat H_A t}\hat\rho_A(t)e^{-i\hat H_A t}$, so the free
qubit rotation is removed.

Figure~\ref{fig:article-combined-bloch-main} compares the
interaction-picture Pauli expectation values for the Fock and thermal
environments, while Fig.~\ref{fig:article-combined-purity-main}
shows the corresponding purities. Together, these figures illustrate
how identical initial field covariances can lead to different reduced
dynamics, consistent with the second-order contribution in
Eq.~\eqref{eq:fock-thermal-generator-difference}.
\onecolumngrid
\articlefigure{\textwidth}{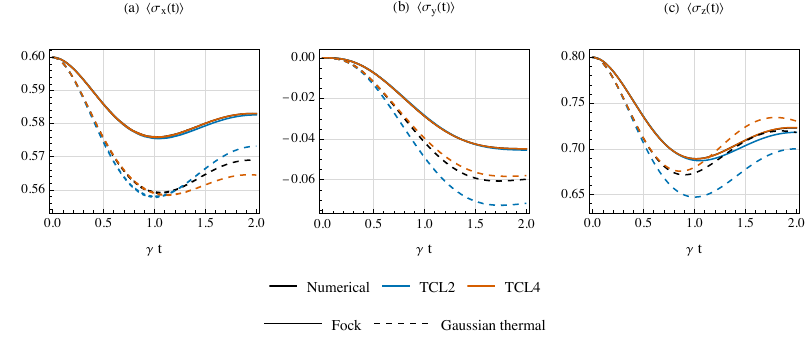}{%
\label{fig:article-combined-bloch-main}Interaction-picture Pauli expectation values (a) $\langle\sigma_x(t)\rangle$, (b) $\langle\sigma_y(t)\rangle$, and (c) $\langle\sigma_z(t)\rangle$ at $\eta=\articleMainEta$ and $g=\articleMainG$. The numerical solution is black, TCL2 blue, and TCL4 orange. Solid curves correspond to the Fock environment and dashed curves to the thermal environment. Parameters and initial states are specified in Sec.~\ref{sec:numerical-comparison}.}
\twocolumngrid

\articlefigure{\linewidth}{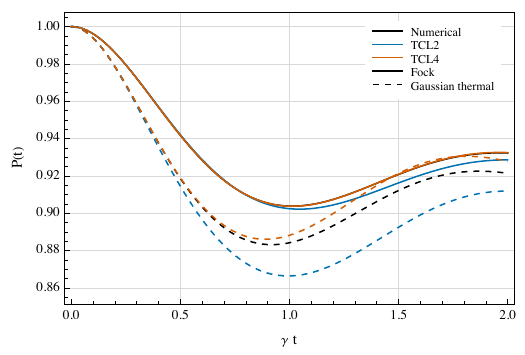}{%
\label{fig:article-combined-purity-main}Purity $P(t)=\Tr\rho^2(t)$ for the same environments, coupling, initial states and curve conventions as in Fig.~\ref{fig:article-combined-bloch-main}.}

Figure~\ref{fig:article-error-scan} shows how the maximum
trace-distance error varies with coupling strength.
TCL4 improves accuracy for both environments at every displayed
coupling. At \(\eta=\articleMainEta\), the error decreases relative
to TCL2 by approximately a factor of 20 for the Fock environment
and a factor of 2 for the thermal environment.
\articlefigure{\linewidth}{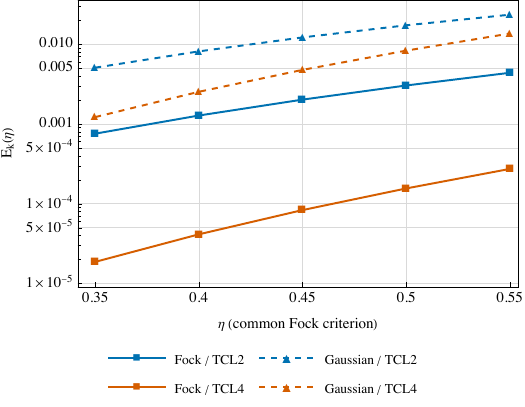}{%
\label{fig:article-error-scan}Maximum trace-distance errors $E_2(\eta)$ and $E_4(\eta)$ for the Fock and thermal environments. The maxima are evaluated on the sampled time grid.}

\section{Conclusions}

We have developed a perturbative method for the reduced dynamics of a system coupled quadratically to a bosonic environment with a quadratic Hamiltonian and linear Lindblad operators. For a factorized initial state, the method accommodates non-Gaussian and nonstationary environmental states with the required finite moments. Linear evolution of multiplication superoperators in a doubled space, combined with a characteristic-function representation, reduces ordered multitime environmental correlations to finite sums of derivatives of the initial characteristic function. These correlations supply the moments entering the recursive TCL expansion at any prescribed finite order. The construction applies to multimode environments and separates the propagation of the interaction coefficients from the specification of the initial environmental state.

The two-photon Rabi model with a damped environmental mode illustrates how quadratic coupling probes statistics beyond the field covariance. Initial Fock and thermal states with the same mean occupation have identical first and second field moments, but their fourth moments differ for nonzero occupation. We obtained the resulting difference between their two-time interaction correlations explicitly. The first-order reduced generators vanish for both states, whereas the second-order generators distinguish them. More generally, perturbative order \(N\) requires field moments up to degree \(2N\). Comparisons with the numerical solution of the composite master equation show that including the fourth-order TCL contribution reduces the maximum trace-distance error for both initial states over the coupling range displayed here.

The numerical benchmarks test finite-order accuracy in the regimes considered, without establishing convergence of the perturbative series. Deriving truncation-error bounds for nonstationary environments with nonlinear coupling remains an open problem, for which estimates of higher connected multitime correlations are needed.

\begin{acknowledgments}
The authors thank Kh.\,Sh. Meretukov and R. Singh for fruitful discussions
of the problems considered in this work. 

This work was supported by the Russian Science Foundation under grant no. 25-21-00700, \url{https://rscf.ru/project/25-21-00700/}.
\end{acknowledgments}

\bibliography{gameOfNons}

	\clearpage
	\raggedbottom
	\appendix

	\section{\label{app:doubled-heisenberg}Heisenberg evolution of multiplication superoperators}

	In this appendix we collect the technical results underlying the doubled-space
	formalism used in the main text.

	\subsection{Canonical commutation relations}

	We use the doubled vector from Eq.~\eqref{eq:doubled-vector} and the
	matrices \(E,J\) from Eq.~\eqref{eq:exchange-matrix}. The original
	field commutator is \([\mathfrak a,\mathfrak a^{\mathsf T}]=-J\).

\begin{proposition}
	The doubled vector of multiplication superoperators satisfies
	\begin{equation}
		[\mathfrak A,\mathfrak A^{\mathsf T}]
		=
		-\mathbf{J},
	\end{equation}
	where \(\mathbf{J} =J\oplus J.\)
\end{proposition}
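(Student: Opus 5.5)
We need to prove \([\mathfrak A,\mathfrak A^{\mathsf T}]=-\mathbf J\), with \(\mathbf J=J\oplus J\). The commutator is that of superoperators, and each entry \([\mathfrak A_\alpha,\mathfrak A_\beta]\) is read as a multiple of \(\mathcal I_B\). The plan is to split the \(4d\times4d\) commutator matrix into four \(2d\times2d\) blocks according to the decomposition \(\mathfrak A=(\mathfrak a\bullet,\ \bullet E\mathfrak a)^{\mathsf T}\). Each block is then computed by acting on an arbitrary operator \(\hat X\), using only \([\mathfrak a,\mathfrak a^{\mathsf T}]=-J\) and associativity of operator multiplication.

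\emph{Diagonal left block.} For left multiplications, \((\mathfrak a_j\bullet)(\mathfrak a_k\bullet)\hat X=\mathfrak a_j\mathfrak a_k\hat X\). Hence \([\mathfrak a_j\bullet,\mathfrak a_k\bullet]=([\mathfrak a_j,\mathfrak a_k])\bullet=-J_{jk}\mathcal I_B\), so the upper-left block is \(-J\).

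\emph{Off-diagonal blocks.} Left and right multiplications commute by associativity: \(\mathfrak a_j(\hat X\mathfrak b_k)=(\mathfrak a_j\hat X)\mathfrak b_k\). Both off-diagonal blocks therefore vanish.

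\emph{Diagonal right block.} This is the only step that needs care, because right multiplications compose in reversed order. Write \(\mathfrak b=E\mathfrak a\). Then
\begin{equation*}
(\bullet\mathfrak b_j)(\bullet\mathfrak b_k)\hat X=\hat X\mathfrak b_k\mathfrak b_j,
\end{equation*}
so \([\bullet\mathfrak b_j,\bullet\mathfrak b_k]=\bullet[\mathfrak b_k,\mathfrak b_j]\). Next,
\begin{equation*}
[\mathfrak b,\mathfrak b^{\mathsf T}]=E[\mathfrak a,\mathfrak a^{\mathsf T}]E^{\mathsf T}=-EJE.
\end{equation*}
A direct block computation gives \(EJE=-J\), that is \(EJE=J^{\mathsf T}\). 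It follows that \([\mathfrak b_k,\mathfrak b_j]=(EJE)_{jk}\cdot(-1)\cdot(-1)\). More carefully, \([\mathfrak b_k,\mathfrak b_j]=-(EJE)_{kj}=J_{kj}=-J_{jk}\), using the antisymmetry of \(J\). The right block is therefore \(-J\) as well.

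The main obstacle is only bookkeeping, namely tracking the two sign flips from order reversal and from conjugation by \(E\). It is precisely this cancellation that motivates placing \(E\) in the definition of the doubled vector, Eq.~\eqref{eq:doubled-vector}. Assembling the four blocks gives \(-(J\oplus J)=-\mathbf J\).
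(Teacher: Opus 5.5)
Your proof is correct and matches the paper's essentially step for step. Both use the same block decomposition and the fact that left and right multiplications commute, and both handle the right block by combining order reversal with \(EJE^{\mathsf T}=-J\) to get \(-J\) again. The intermediate line before ``More carefully'' reads awkwardly, but it is still correct, since \((EJE)_{jk}=J_{kj}\).
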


\begin{proof}
	For left multiplication one has
	\begin{equation}
		\left[
		(\mathfrak a \, \bullet)_i,
		(\mathfrak a \, \bullet)_j
		\right]
		=
		([\mathfrak a_i,\mathfrak a_j]\,\bullet)
		=
		-J_{ij}\mathcal I_B.
	\end{equation}

	For the right block we first note that
	\begin{equation}
		\left[
		(E \mathfrak a)_i,
		(E \mathfrak a)_j
		\right]
		=
		\left(
		E
		[\mathfrak a,\mathfrak a^{\mathsf T}]
		E^{\mathsf T}
		\right)_{ij}
		=
		(-EJE^{\mathsf T})_{ij}
		=
		J_{ij}.
	\end{equation}
	Since right multiplication reverses the order of operators,
	\begin{equation}
		\left[
		(\bullet \, E \mathfrak a)_i,
		(\bullet \, E \mathfrak a)_j
		\right]
		=
		-\bigl(\bullet\,[(E\mathfrak a)_i,(E\mathfrak a)_j]\bigr)
		=
		-J_{ij}\mathcal I_B.
	\end{equation}

	Moreover, left and right multiplication commute,
	\begin{equation}
		\left[
		(\mathfrak a \, \bullet)_i,
		(\bullet \, E \mathfrak a)_j
		\right]
		=
		0.
	\end{equation}
	Therefore,
	\begin{eqnarray}
		[\mathfrak A,\mathfrak A^{\mathsf T}]
		=
		\begin{pmatrix}
			-J&0\\
			0&-J
		\end{pmatrix} = -\mathbf{J}
	\end{eqnarray}
\end{proof}

	\subsection{Heisenberg evolution of multiplication superoperators}

	The symmetric coefficient matrix in Eq.~\eqref{eq:doubled-liouvillian} is
	\begin{equation}
		\mathbf{L}
		=
		\begin{pmatrix}
			-iH-\dfrac{\Gamma+\Gamma^{\mathsf T}}2
			&
			\Gamma E
			\\[1ex]
			E\Gamma^{\mathsf T}
			&
			iEHE
			-
			E\dfrac{\Gamma+\Gamma^{\mathsf T}}2E
		\end{pmatrix}.
		\label{eq:doubled-liouvillian-matrix}
	\end{equation}

	For the evolution defined in Eq.~\eqref{eq:doubled-evolution-definition},
	we obtain the following result.

	\begin{proposition}
		The doubled vector of multiplication superoperators satisfies
		\begin{equation}
			\frac{d}{dt}
			\mathfrak A(t)
			=
			-\mathbf{J}\mathbf{L}\,
			\mathfrak A(t).
		\end{equation}
		Consequently,
		\begin{equation}
			\mathfrak A(t)
			=
			\mathbf{G}(t)\mathfrak A,
		\end{equation}
		where
		\begin{equation}
			\mathbf{G}(t)
			=
			e^{-\mathbf{J}\mathbf{L}t}.
		\end{equation}
	\end{proposition}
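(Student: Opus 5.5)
The plan is to differentiate the definition \eqref{eq:doubled-evolution-definition} directly and reduce everything to a commutator computed with the canonical commutation relations of the preceding proposition. Since \(\mathcal T_t=e^{t\mathcal L_0^B}\) and, formally, \(\mathcal T_{-t}=e^{-t\mathcal L_0^B}\), one has
\begin{equation*}
\frac{d}{dt}\mathfrak A(t)=\mathcal T_{-t}\bigl(\mathfrak A\,\mathcal L_0^B-\mathcal L_0^B\,\mathfrak A\bigr)\mathcal T_t
=\mathcal T_{-t}\,[\mathfrak A,\mathcal L_0^B]\,\mathcal T_t ,
\end{equation*}
understood componentwise as superoperator commutators. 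The constant term \(\tfrac12\operatorname{Tr}[\Gamma J]\mathcal I_B\) in Eq.~\eqref{eq:doubled-liouvillian} drops out of the commutator. The whole question therefore reduces to computing \([\mathfrak A_k,\tfrac12\mathfrak A^{\mathsf T}\mathbf L\mathfrak A]\).

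For this computation I will use the Leibniz rule together with \([\mathfrak A_k,\mathfrak A_i]=-\mathbf J_{ki}\mathcal I_B\). This gives
\begin{equation*}
\Bigl[\mathfrak A_k,\tfrac12\textstyle\sum_{i,j}\mathbf L_{ij}\mathfrak A_i\mathfrak A_j\Bigr]
=\tfrac12\textstyle\sum_{i,j}\mathbf L_{ij}\bigl(-\mathbf J_{ki}\mathfrak A_j-\mathbf J_{kj}\mathfrak A_i\bigr)
=-(\mathbf J\mathbf L\mathfrak A)_k ,
\end{equation*}
where the last step uses \(\mathbf L=\mathbf L^{\mathsf T}\). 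Since \(-\mathbf J\mathbf L\) is a numerical matrix, it passes through the similarity transformation:
\begin{equation*}
\mathcal T_{-t}(-\mathbf J\mathbf L\mathfrak A)\mathcal T_t=-\mathbf J\mathbf L\,\mathfrak A(t).
\end{equation*}
This is the stated differential equation. It is a linear system with constant coefficients and initial condition \(\mathfrak A(0)=\mathfrak A\), so its unique solution is \(\mathbf G(t)\mathfrak A\) with \(\mathbf G(t)=e^{-\mathbf J\mathbf L t}\). To see this, note that \(\mathbf G(t)\mathfrak A\) satisfies the same equation and the same initial condition. Uniqueness can be phrased as \(\frac{d}{dt}\bigl[\mathbf G(-t)\mathfrak A(t)\bigr]=0\).

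The main obstacle is justifying that \(\mathcal L_0^B\) really equals \(\tfrac12\mathfrak A^{\mathsf T}\mathbf L\mathfrak A+\tfrac12\operatorname{Tr}[\Gamma J]\mathcal I_B\) with the block matrix \eqref{eq:doubled-liouvillian-matrix}. I will take this as given from Eq.~\eqref{eq:doubled-liouvillian}, but I would verify it block by block as a check:
\begin{enumerate}
\item The term \(-\tfrac i2(\mathfrak a^{\mathsf T}H\mathfrak a)\bullet\) gives the \(-iH\) part of the upper-left block.
\item The term \(\tfrac i2\bullet\,\mathfrak a^{\mathsf T}H\mathfrak a\) must be rewritten through \(\bullet E\mathfrak a\). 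Because right multiplication composes in reverse order, it produces \(iEHE\) in the lower-right block, using \(H^{\mathsf T}=H\).
\item The sandwich term \(\mathfrak a^{\mathsf T}\bullet\,\Gamma\mathfrak a\) gives the off-diagonal blocks \(\Gamma E\) and \(E\Gamma^{\mathsf T}\), split symmetrically.
\item The anticommutator terms are first symmetrized, using \(X^{\mathsf T}\Gamma^{\mathsf T}X=\tfrac12X^{\mathsf T}(\Gamma+\Gamma^{\mathsf T})X\) for commuting-component vectors \(X\).
\end{enumerate}
Step 4 needs care: for noncommuting components, symmetrizing \(\mathfrak a^{\mathsf T}\Gamma^{\mathsf T}\mathfrak a\) shifts it by a multiple of \(\operatorname{Tr}[\Gamma J]\). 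The constant is generated exactly here, and only its presence, not its value, matters for the commutator.

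Finally, I would remark on the domain issue. \(\mathcal T_{-t}\) is used formally, as in the main text, and every identity above is an algebraic identity between polynomials in multiplication superoperators. Alternatively, the statement can be read as the ODE for \(\mathfrak A\,\mathcal T_t=\mathcal T_t\,\mathbf G(t)\mathfrak A\), which avoids inverses altogether.
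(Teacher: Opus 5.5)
Your proof is correct and follows essentially the same route as the paper. You differentiate the similarity transformation, reduce to the commutator of $\mathfrak A$ with $\tfrac12\mathfrak A^{\mathsf T}\mathbf L\mathfrak A$ (the constant term drops out), evaluate it with $[\mathfrak A,\mathfrak A^{\mathsf T}]=-\mathbf J$ and $\mathbf L=\mathbf L^{\mathsf T}$, and solve the resulting linear ODE; the only differences are cosmetic (you compute the commutator at $t=0$ and conjugate afterwards, rather than writing it in terms of $\mathfrak A(t)$), and your block-by-block check of the doubled form of $\mathcal L_0^B$ (including the constant $\tfrac12\operatorname{Tr}[\Gamma J]$) and the inverse-free reading $\mathfrak A\,\mathcal T_t=\mathcal T_t\,\mathbf G(t)\mathfrak A$ are correct additions that the paper does not spell out.
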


	\begin{proof}
		Differentiating Eq.~\eqref{eq:doubled-evolution-definition} gives
		\begin{equation}
			\frac{d}{dt}\mathfrak A(t)
			=
			\left[
			\mathfrak A(t),
			\frac12
			\mathfrak A(t)^{\mathsf T}
			\mathbf{L}
			\mathfrak A(t)
			\right].
		\end{equation}

		Using
		\[
		[\mathfrak A(t),\mathfrak A(t)^{\mathsf T}]
		=
		-\mathbf{J},
		\]
		one finds
		\begin{equation}
			\frac{d}{dt}
			\mathfrak A(t)
			=
			-\mathbf{J}\mathbf{L}\,
			\mathfrak A(t).
		\end{equation}

		The stated expression for \(\mathfrak A(t)\) is the solution with
		\(\mathfrak A(0)=\mathfrak A\).
	\end{proof}

Consequently, for any \(\vec{\mathbf{x}}\in\mathbb C^{4d}\),
\begin{equation}
	\vec{\mathbf{x}}^{\mathsf T}\mathfrak A(t)
	=
	\vec{\mathbf{x}}(t)^{\mathsf T}\mathfrak A,
	\qquad
	\vec{\mathbf{x}}(t)
	=
	\mathbf{G}(t)^{\mathsf T}\vec{\mathbf{x}}.
\end{equation}
Since the similarity transformation defining \(\mathfrak A(t)\)
preserves products, for any symmetric matrix
\(\mathbf{X}\in\mathbb C^{4d\times4d}\) we likewise obtain
\begin{equation}
	\mathfrak A(t)^{\mathsf T}\mathbf{X}\mathfrak A(t)
	=
	\mathfrak A^{\mathsf T}\mathbf{X}(t)\mathfrak A,
	\qquad
	\mathbf{X}(t)
	=
	\mathbf{G}(t)^{\mathsf T}\mathbf{X}\mathbf{G}(t).
\end{equation}
Thus homogeneous linear and quadratic forms retain their form
under the free evolution, with time dependence carried by their
coefficients.

	\section{\label{app:differential-representation}Differential representation in the original phase space}

	This appendix specifies the phase-space conventions and proves the
	differential identities used in the main text.

	\subsection{\label{app:weyl-conventions}Weyl representation and phase-space conventions}

	Integration over
	\(\mathbb C^d\simeq\mathbb R^{2d}\) is performed with respect to the standard
	Lebesgue measure
	\begin{equation}
		\mathrm d^{2d}\vec{z}
		:=
		\prod_{k=1}^d
		\mathrm d\operatorname{Re}z_k\,
		\mathrm d\operatorname{Im}z_k.
	\end{equation}

	We denote the corresponding Dirac distribution by
	\begin{equation}
		\delta^{(2d)}(\vec{z})
		=
		\prod_{k=1}^d
		\delta\!\left(\operatorname{Re}z_k\right)
		\delta\!\left(\operatorname{Im}z_k\right).
	\end{equation}

For the characteristic function defined in Sec.~\ref{sec:quadratic-environment},
the inverse relation takes the form
	\begin{equation}
		\hat\rho_B
		=
		\frac1{\pi^d}
		\int
		\mathrm d^{2d}\vec{z}\,
	\chi(\vec{z})\,
	e^{-i\vec{z}^{\mathsf T}\mathfrak a}.
	\label{eq:weyl-inversion}
\end{equation}

	\subsection{Differentiation and distributions}

	The Wirtinger derivatives are defined by
	\begin{equation}
\begin{aligned}
		\partial_{z_k}
		&=
		\frac12
		\left(
		\partial_{\operatorname{Re}z_k}
		-i\partial_{\operatorname{Im}z_k}
		\right),
		\\
		\partial_{z_k^*}
		&=
		\frac12
		\left(
		\partial_{\operatorname{Re}z_k}
		+i\partial_{\operatorname{Im}z_k}
		\right).
	\end{aligned}
\end{equation}
	For a multi-index \(\alpha\in\mathbb N_0^{2d}\), derivatives of the Dirac
	distribution are defined by
	\begin{equation}
		\int \mathrm d^{2d}\vec{z}\,
		f(\vec{z})
		\partial_{\vec{z}}^{\alpha}
		\delta^{(2d)}(\vec{z})
		=
		(-1)^{|\alpha|}
		\left.
		\partial_{\vec{z}}^{\alpha}f(\vec{z})
		\right|_{\vec{z}=\vec{0}}.
	\end{equation}

	\subsection{Left and right actions on Weyl operators}

	\begin{proposition}
		For every vector \(\vec{h}\in\mathbb C^{2d}\),
		\begin{align}
			\vec{h}^{\mathsf T}\mathfrak a\,
			\hat W(\vec{z})
			&=
			i \vec{h}^{\mathsf T}
			\left(
			\partial_{\vec{z}}
			+
			\frac12J\vec{z}
			\right)
			\hat W(\vec{z}),
			\\
			\hat W(\vec{z})\,
			\vec{h}^{\mathsf T}\mathfrak a
			&=
			i \vec{h}^{\mathsf T}
			\left(
			\partial_{\vec{z}}
			-
			\frac12J\vec{z}
			\right)
			\hat W(\vec{z}).
		\end{align}
	\end{proposition}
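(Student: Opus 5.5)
The plan is to differentiate the Weyl operator along a one-parameter family and use the Baker--Campbell--Hausdorff (BCH) formula. The commutator \([\mathfrak a,\mathfrak a^{\mathsf T}]=-J\) is a c-number, so BCH truncates after the first commutator. For \(\vec z\) and an increment \(s\vec w\), write
\begin{equation*}
\hat W(\vec z+s\vec w)
=e^{-i\vec z^{\mathsf T}\mathfrak a-is\vec w^{\mathsf T}\mathfrak a}.
\end{equation*}
Applying \(e^{X+Y}=e^{X}e^{Y}e^{-[X,Y]/2}=e^{Y}e^{X}e^{[X,Y]/2}\) with \(X=-i\vec z^{\mathsf T}\mathfrak a\) and \(Y=-is\vec w^{\mathsf T}\mathfrak a\) gives
\begin{equation*}
[X,Y]=-s\,\vec z^{\mathsf T}[\mathfrak a,\mathfrak a^{\mathsf T}]\vec w=s\,\vec z^{\mathsf T}J\vec w .
\end{equation*}
Hence
\begin{equation*}
\hat W(\vec z+s\vec w)=e^{-s\vec z^{\mathsf T}J\vec w/2}\,e^{-is\vec w^{\mathsf T}\mathfrak a}\,\hat W(\vec z)
=e^{s\vec z^{\mathsf T}J\vec w/2}\,\hat W(\vec z)\,e^{-is\vec w^{\mathsf T}\mathfrak a}.
\end{equation*}
Differentiating at \(s=0\) yields
\begin{align*}
\vec w^{\mathsf T}\partial_{\vec z}\hat W(\vec z)
&=-i\,(\vec w^{\mathsf T}\mathfrak a)\hat W(\vec z)-\tfrac12\vec z^{\mathsf T}J\vec w\,\hat W(\vec z),\\
\vec w^{\mathsf T}\partial_{\vec z}\hat W(\vec z)
&=-i\,\hat W(\vec z)(\vec w^{\mathsf T}\mathfrak a)+\tfrac12\vec z^{\mathsf T}J\vec w\,\hat W(\vec z).
\end{align*}
Here the left side is the directional derivative in the formal variables. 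Because \(J^{\mathsf T}=-J\), we have \(\vec z^{\mathsf T}J\vec w=-\vec w^{\mathsf T}J\vec z\). Solving for the products and multiplying by \(i\) gives
\begin{equation*}
(\vec w^{\mathsf T}\mathfrak a)\hat W=i\,\vec w^{\mathsf T}\bigl(\partial_{\vec z}+\tfrac12 J\vec z\bigr)\hat W,
\qquad
\hat W(\vec w^{\mathsf T}\mathfrak a)=i\,\vec w^{\mathsf T}\bigl(\partial_{\vec z}-\tfrac12 J\vec z\bigr)\hat W,
\end{equation*}
which is the claim with \(\vec h=\vec w\).

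The main obstacle is making the passage from \(\partial_s\) to the Wirtinger gradient \(\partial_{\vec z}\) legitimate. The vector \(\vec z=(z,z^*)\) is constrained, so one cannot perturb \(\vec z\) in an arbitrary direction \(\vec w\in\mathbb C^{2d}\) and stay on the physical slice. I would handle this by treating \(\hat W\) as a function of \(2d\) independent complex variables \((z_k,\bar z_k)\), that is, by polarization. In these variables \(\hat W\) is entire in the formal sense, since it is the exponential of a linear form. Its holomorphic partial derivatives with respect to \(z_k\) and \(\bar z_k\) restrict on the slice \(\bar z_k=z_k^*\) exactly to the Wirtinger derivatives defined above. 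It then suffices to take \(\vec w\) equal to each standard basis vector \(\vec e_j\), which gives \(\partial_{z_j}\) and \(\partial_{z_j^*}\) separately. The general \(\vec h\) follows by linearity in \(\vec h\), since both sides are linear.

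A secondary point is that \(\mathfrak a\) is unbounded. I would state the identities on a suitable dense domain, for instance finite-particle vectors or Schwartz vectors invariant under Weyl operators. On such a domain the derivative in \(s\) of \(e^{-is\vec w^{\mathsf T}\mathfrak a}\) at \(s=0\) is \(-i\vec w^{\mathsf T}\mathfrak a\). Alternatively, the identities can be read as equalities of operator-valued distributions, which is how they are used later under the integral with \(\chi\). A quick consistency check is the single-mode case with \(\vec h=(1,0)\), that is \(\hat a\hat W\). Here \(\hat W=e^{-i(z\hat a+z^*\hat a^\dagger)}\), and normal ordering directly gives \(\hat a\hat W=i(\partial_z-\tfrac12 z^*)\hat W\), in agreement with \(\tfrac12(J\vec z)_1=-\tfrac12 z^*\).
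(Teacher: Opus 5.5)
Your strategy of using two BCH orderings of $\hat W(\vec z+s\vec w)$ and then differentiating at $s=0$ is sound. However, the factorization you display has the central factors interchanged, so the derivation as written proves the wrong identities. Take $e^{X}=\hat W(\vec z)$ and $e^{Y}=e^{-is\vec w^{\mathsf T}\mathfrak a}$. Your own (correct) formula $e^{X+Y}=e^{Y}e^{X}e^{[X,Y]/2}=e^{X}e^{Y}e^{-[X,Y]/2}$, together with $[X,Y]=s\,\vec z^{\mathsf T}J\vec w$, gives
\[
\hat W(\vec z+s\vec w)=e^{s\vec z^{\mathsf T}J\vec w/2}\,e^{-is\vec w^{\mathsf T}\mathfrak a}\,\hat W(\vec z)=e^{-s\vec z^{\mathsf T}J\vec w/2}\,\hat W(\vec z)\,e^{-is\vec w^{\mathsf T}\mathfrak a},
\]
which is the opposite of what you wrote. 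Hence both of your differentiated equations carry the wrong sign on $\tfrac12\vec z^{\mathsf T}J\vec w\,\hat W$. Solving them honestly with $\vec z^{\mathsf T}J\vec w=-\vec w^{\mathsf T}J\vec z$ yields $(\vec w^{\mathsf T}\mathfrak a)\hat W=i\vec w^{\mathsf T}(\partial_{\vec z}-\tfrac12J\vec z)\hat W$ and $\hat W(\vec w^{\mathsf T}\mathfrak a)=i\vec w^{\mathsf T}(\partial_{\vec z}+\tfrac12J\vec z)\hat W$, i.e.\ the left and right identities exchanged. Your final line is correct only because the ``solving'' step silently flips the sign a second time. Your own single-mode check exposes this. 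For $\vec w=(1,0)^{\mathsf T}$ one has $\vec z^{\mathsf T}J\vec w=z^*$, so your first derivative equation would give $\hat a\hat W=i(\partial_z+\tfrac12 z^*)\hat W$, contradicting the normal-ordering result you quote. With the corrected factorization, $\vec w^{\mathsf T}\partial_{\vec z}\hat W=-i(\vec w^{\mathsf T}\mathfrak a)\hat W+\tfrac12\vec z^{\mathsf T}J\vec w\,\hat W=-i\hat W(\vec w^{\mathsf T}\mathfrak a)-\tfrac12\vec z^{\mathsf T}J\vec w\,\hat W$, and the statement follows exactly as you intend.

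Once repaired, your argument differs from the paper's in how the first identity is obtained. The paper cites the left-multiplication formula as the standard Weyl derivative identity and derives only the right-multiplication one. It commutes $\mathfrak a$ through the Weyl operator: with $\hat A=-i\vec z^{\mathsf T}\mathfrak a$, the relation $[\hat A,\mathfrak a]=-iJ\vec z$ gives $\hat W(\vec z)\mathfrak a=(\mathfrak a-iJ\vec z)\hat W(\vec z)$, and substituting the first identity produces the second. Your two BCH orderings prove both identities from scratch, so you supply the step the paper takes as known. The paper's route is shorter, and the relative sign between the two identities comes from a single commutator. Your remarks on polarization and on domains address points the paper leaves implicit. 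An alternative that avoids non-unitary polarized operators is to differentiate along $\vec w=(\vec u,\vec u^{\,*})$ with real $s$. This keeps $\vec z+s\vec w$ on the physical slice and yields $\vec w^{\mathsf T}\partial_{\vec z}$ by the Wirtinger chain rule. You then extend to arbitrary $\vec h$ by complex linearity, since such vectors span $\mathbb C^{2d}$.
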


	\begin{proof}
		The first identity is the standard derivative formula for a Weyl
		operator. For the second, set
		\(
		\hat A=-i\vec{z}^{\mathsf T}\mathfrak a
		\).
		The canonical commutation relations give
		\(
		[\hat A,\mathfrak a]=-iJ\vec{z}
		\), and all higher commutators vanish. Hence
		\begin{equation}
			\hat W(\vec{z})\mathfrak a
			=
			\left(
			\mathfrak a-iJ\vec{z}
			\right)\hat W(\vec{z}).
		\end{equation}
		Substitution of the first identity proves the second.
	\end{proof}

	\subsection{Linear forms in the doubled space}
	For a doubled vector \(\vec{\mathbf{x}}=(\vec{x}_L,\vec{x}_R)^{\mathsf T}\), the linear form is
	\begin{equation}
		\vec{\mathbf{x}}^{\mathsf T}\mathfrak A
		=
		\vec{x}_L^{\mathsf T}\mathfrak a\,\bullet
		+
		\bullet\,\vec{x}_R^{\mathsf T}E\mathfrak a
	\end{equation}
	Applying the preceding proposition to the two blocks gives
	\((\vec{\mathbf{x}}^{\mathsf T}\mathfrak A)\hat W(\vec{z})
	=\mathscr D_{\vec{\mathbf{x}}}\hat W(\vec{z})\), with
	\(\mathscr D_{\vec{\mathbf{x}}}\) defined in Eq.~\eqref{eq:def-D-script-x}.
	Since differential operators commute with left and right multiplication
	by fixed environmental operators, iteration reverses the order of the
	factors. Taking the trace and using
	\(\operatorname{tr}_B[\hat W(\vec{z})]=\pi^d\delta^{(2d)}(\vec{z})\)
	then gives Eq.~\eqref{eq:ordered-doubled-string}.

	Substitution of the inverse Weyl transform,
	Eq.~\eqref{eq:weyl-inversion}, into the definition of
	\(m(\vec{\mathbf{x}}_1,\ldots,\vec{\mathbf{x}}_m)\) cancels the factor \(\pi^d\) and gives
	Eq.~\eqref{eq:environmental-moment-differential}.

	Finally, the rank-one decomposition in Eq.~\eqref{eq:rank-one-decomposition}
	gives
	\(\mathfrak A^{\mathsf T}\mathbf{X}\mathfrak A
	=\sum_\alpha(\vec{\mathbf{x}}_\alpha^{\mathsf T}\mathfrak A)
	(\vec{\mathbf{y}}_\alpha^{\mathsf T}\mathfrak A)\).
	Applying Eq.~\eqref{eq:ordered-doubled-string} term by term yields the
	quadratic-form identity in the main text. Expanding each quadratic factor
	in an ordered product in the same way gives its reduction to the moments
	\(m\), without changing the order of the original factors.

	The explicit evaluation of these moments by differential normal ordering
	and integration by parts is given in Appendix~\ref{app:wick}.

	\section{\label{app:wick}Algebraic reduction of environmental moments}

	\subsection{Differential normal ordering and environmental moments}

	We first establish the differential identity used to prove
	Proposition~\ref{prop:environmental-moments}.

	\begin{lemma}[Differential normal ordering]
		\label{lem:wick-delta}
		For arbitrary \(\vec{p}_i,\vec{q}_i\in\mathbb C^{2d}\), let
		\[
		\mathscr D_i
		=
		i
		\left(
		\vec{p}_i^{\mathsf T}\partial_{\vec{z}}
		+
		\frac12\vec{q}_i^{\mathsf T}J\vec{z}
		\right),
		\qquad
		i=1,\dots,m.
		\]
		Then, in the sense of distributions,
		\begin{align}
			&
			\mathscr D_1
			\cdots
			\mathscr D_m
			\delta^{(2d)}(\vec{z})
			\nonumber\\
			&=
			\sum_{\pi\in\operatorname{Pair}_{\leq}(m)}
			\left(
			\prod_{\substack{\{r,s\}\in\pi\\ r<s}}
			\frac12
			\vec{q}_r^{\mathsf T}
			J
			\vec{p}_s
			\right)
			\nonumber\\
			&\qquad{}\times
			\left(
			\prod_{k\notin\operatorname{supp}\pi}
			i \vec{p}_k^{\mathsf T}\partial_{\vec{z}}
			\right)
			\delta^{(2d)}(\vec{z}),
		\end{align}
		where \(\operatorname{Pair}_{\leq}(m)\) is the set of all partial
		pairings of \(\{1,\dots,m\}\). The derivative operators associated
		with unpaired indices commute.
	\end{lemma}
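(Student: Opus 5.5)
We prove the identity by induction on \(m\), normal ordering the multiplication parts to the right. Split each operator as \(\mathscr D_i=\mathscr A_i+\mathscr B_i\), with the derivative part \(\mathscr A_i=i\vec p_i^{\mathsf T}\partial_{\vec z}\) and the multiplication part \(\mathscr B_i=\tfrac{i}{2}\vec q_i^{\mathsf T}J\vec z\). Two elementary facts drive the argument.

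First, the derivative parts commute among themselves, and so do the multiplication parts. The mixed commutator is a scalar: since \(\partial_{\vec z}(\vec q_r^{\mathsf T}J\vec z)=J^{\mathsf T}\vec q_r\), one finds
\begin{equation*}
[\mathscr B_r,\mathscr A_s]=-\tfrac{i\cdot i}{2}\,\vec p_s^{\mathsf T}J^{\mathsf T}\vec q_r=\tfrac12\vec q_r^{\mathsf T}J\vec p_s .
\end{equation*}
This is exactly the contraction weight in the statement, oriented with the left factor \(r\) supplying \(\vec q\) and the right factor \(s\) supplying \(\vec p\).

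Second, a multiplication part acting directly on the delta function vanishes in the distributional sense, \(\mathscr B_r\,\delta^{(2d)}(\vec z)=0\). This holds because \(z_k\delta^{(2d)}=0\) and \(z_k^*\delta^{(2d)}=0\). Only the undifferentiated delta needs this, since every \(\mathscr B\) will be moved past all derivatives before it reaches \(\delta^{(2d)}\).

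The base case \(m=1\) gives \(\mathscr D_1\delta^{(2d)}=\mathscr A_1\delta^{(2d)}\), which is the single empty-pairing term. For the inductive step, assume the formula for \(\mathscr D_2\cdots\mathscr D_m\delta^{(2d)}\), written as a sum over partial pairings \(\pi'\) of \(\{2,\dots,m\}\). In each term, scalar contraction weights multiply \(\prod_{k\in U(\pi')}\mathscr A_k\,\delta^{(2d)}\), where \(U(\pi')\) is the set of unpaired indices. Apply \(\mathscr D_1=\mathscr A_1+\mathscr B_1\) to each term.

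The \(\mathscr A_1\) part simply adjoins index \(1\) as an unpaired derivative. The \(\mathscr B_1\) part has to be commuted through \(\prod_{k\in U(\pi')}\mathscr A_k\). Because each \([\mathscr B_1,\mathscr A_s]\) is a scalar, the Leibniz rule for commutators gives
\begin{equation*}
\mathscr B_1\prod_{k\in U}\mathscr A_k=\sum_{s\in U}\tfrac12\vec q_1^{\mathsf T}J\vec p_s\prod_{k\in U\setminus\{s\}}\mathscr A_k+\Bigl(\prod_{k\in U}\mathscr A_k\Bigr)\mathscr B_1 .
\end{equation*}
The last term is annihilated on \(\delta^{(2d)}\).

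Each surviving term pairs \(1\) with some \(s>1\) that was unpaired in \(\pi'\). Together with the \(\mathscr A_1\) terms, this enumerates every partial pairing \(\pi\) of \(\{1,\dots,m\}\) exactly once: either \(1\notin\operatorname{supp}\pi\), or \(\{1,s\}\in\pi\) for a unique \(s\). The contraction carries the correct orientation \(r=1<s\). This closes the induction. The final remark of the lemma, that the unpaired derivative operators commute, is just the first of the two facts above.

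The only delicate point is the bookkeeping of orientation and sign. One must check that moving a left factor \(\mathscr B_r\) past a right factor \(\mathscr A_s\) gives \(+\tfrac12\vec q_r^{\mathsf T}J\vec p_s\) rather than its negative or transpose. This uses \(J^{\mathsf T}=-J\) together with \(i^2=-1\). One must also confirm that no contraction of the form \(\mathscr A_r\) with \(\mathscr B_s\) for \(r<s\) ever arises. It cannot, because the \(\mathscr B\)'s only move to the right. Everything else is the standard combinatorics of Wick-type normal ordering, interpreted distributionally against test functions \(f\), which is legitimate since all operators involved are polynomial differential operators.
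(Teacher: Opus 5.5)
Your proof is correct and follows essentially the same route as the paper: induction on \(m\) by applying a new operator from the left, where the derivative part adds an unpaired index and the coordinate part contracts with each previously unpaired index. Your scalar commutator \([\mathscr B_r,\mathscr A_s]=\tfrac12\vec q_r^{\mathsf T}J\vec p_s\) (whose sign you get right), combined with \(\mathscr B_r\,\delta^{(2d)}(\vec z)=0\), simply derives explicitly the distributional identity for \(z_i\) acting on derivatives of \(\delta^{(2d)}\) that the paper invokes directly.
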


	\begin{proof}[Proof of Lemma~\ref{lem:wick-delta}]
		The proof proceeds by induction over \(m\).

		For \(m=0\), the statement is trivial. For \(m=1\), using
		\(
		\vec{z}\,\delta^{(2d)}(\vec{z})=0
		\),
		we obtain
		\[
		\mathscr D_1\delta^{(2d)}(\vec{z})
		=
		i \vec{p}_1^{\mathsf T}\partial_{\vec{z}}
		\delta^{(2d)}(\vec{z}),
		\]
		which is the required expression. For the induction step, assume the
		formula for \(\mathscr D_1\cdots\mathscr D_m\delta^{(2d)}(\vec{z})\)
		and apply \(\mathscr D_0\) from the left. Use the distributional identity
		\[
\begin{aligned}
&
		z_i
		\partial_{\vec{z},j_1}
		\cdots
		\partial_{\vec{z},j_l}
		\delta^{(2d)}(\vec{z})
		\\
		&=
		-
		\sum_{\alpha=1}^l
		\delta_{i j_\alpha}
		\partial_{\vec{z},j_1}
		\cdots
		\widehat{\partial_{\vec{z},j_\alpha}}
		\cdots
		\partial_{\vec{z},j_l}
		\delta^{(2d)}(\vec{z}).
		\end{aligned}
\]
		The derivative part of \(\mathscr D_0\) adds the unpaired factor
		\(i \vec{p}_0^{\mathsf T}\partial_{\vec{z}}\). Its coordinate part contracts
		index \(0\) with one previously unpaired index \(k\). The displayed
		identity and the factors of \(i\) give the coefficient
		\(\vec{q}_0^{\mathsf T}J\vec{p}_k/2\). These are exactly all partial pairings after
		adding index \(0\).
	\end{proof}

	\begin{proof}[Proof of Proposition~\ref{prop:environmental-moments}]
		By Eq.~\eqref{eq:environmental-moment-differential}, the ordered
		moment is the integral of \(\chi(\vec{z})\) against
		\(\mathscr D_{\vec{\mathbf{x}}_m}\cdots\mathscr D_{\vec{\mathbf{x}}_1}
		\delta^{(2d)}(\vec{z})\). Apply Lemma~\ref{lem:wick-delta}
		to this reversed differential-operator product. For a pair of
		original indices \(r<s\), the resulting contraction is
		\(\vec{q}_{\vec{\mathbf{x}}_s}^{\mathsf T}J\vec{p}_{\vec{\mathbf{x}}_r}/2\).
		Each unpaired index contributes
		\(i \vec{p}_{\vec{\mathbf{x}}_k}^{\mathsf T}\partial_{\vec{z}}\) acting on the
		distribution. Integrating by parts transfers all
		\(m-2|\pi|\) derivatives to \(\chi\), so their combined factor
		becomes \((-i)^{m-2|\pi|}\). The remaining Dirac distribution
		evaluates these derivatives at \(\vec{z}=0\), yielding the
		stated moment formula.
	\end{proof}

	\subsection{Gaussian environmental moments}

		The quantities \(\mu(\vec{\mathbf{x}}_r)\) and \(\kappa(\vec{\mathbf{x}}_r,\vec{\mathbf{x}}_s)\) defined in Corollary~\ref{cor:gaussian-environmental-moments} are respectively the mean and the ordered centered two-point function of \(\vec{\mathbf{x}}_r^{\mathsf T}\mathfrak A\) and \(\vec{\mathbf{x}}_s^{\mathsf T}\mathfrak A\):
		\begin{align}
			&\mu(\vec{\mathbf{x}}_r)=
			\operatorname{tr}_B
			\Bigl[
			(\vec{\mathbf{x}}_r^{\mathsf T}\mathfrak A)\hat\rho_B
			\Bigr],
			\\
			&\kappa(\vec{\mathbf{x}}_r,\vec{\mathbf{x}}_s)=
			\operatorname{tr}_B
			\Bigl[
			(\vec{\mathbf{x}}_r^{\mathsf T}\mathfrak A-\mu(\vec{\mathbf{x}}_r)\mathcal I_B)
			(\vec{\mathbf{x}}_s^{\mathsf T}\mathfrak A-\mu(\vec{\mathbf{x}}_s)\mathcal I_B)
			\hat\rho_B
			\Bigr].
		\end{align}

	\begin{proof}[Proof of Corollary~\ref{cor:gaussian-environmental-moments}]
		By Proposition~\ref{prop:environmental-moments},
		differentiation of the Gaussian characteristic function shows that every remaining index is
		either a singleton with weight \(\vec{p}_{\vec{\mathbf{x}}_r}^{\mathsf T}\vec{\mu}\), or belongs
		to a pair with weight \(\vec{p}_{\vec{\mathbf{x}}_s}^{\mathsf T}\Sigma \vec{p}_{\vec{\mathbf{x}}_r}\). For any fixed
		pair, the latter state contribution and the differential contraction
		are mutually exclusive alternatives. Summing the two alternatives
		gives \(\kappa(\vec{\mathbf{x}}_r,\vec{\mathbf{x}}_s)\), and summing over all singleton--pair partitions
		proves Eq.~\eqref{eq:environmental-wick}.
	\end{proof}

		In particular, for a centered Gaussian environment moments of an odd
		number of linear forms \(\vec{\mathbf{x}}_r^{\mathsf T}\mathfrak A\) vanish, while
		for even \(m\)
		\begin{equation}
			m(\vec{\mathbf{x}}_1,\ldots,\vec{\mathbf{x}}_m)
			=
			\sum_{\pi\in\operatorname{Pair}(m)}
			\prod_{\substack{\{r,s\}\in\pi\\r<s}}
			\kappa(\vec{\mathbf{x}}_r,\vec{\mathbf{x}}_s).
		\end{equation}

	For example, in the centered case the moment of four linear field factors is
	\begin{align}
		m(\vec{\mathbf{x}}_1,\vec{\mathbf{x}}_2,\vec{\mathbf{x}}_3,\vec{\mathbf{x}}_4)
		&=\kappa(\vec{\mathbf{x}}_1,\vec{\mathbf{x}}_2)\kappa(\vec{\mathbf{x}}_3,\vec{\mathbf{x}}_4)
		\nonumber\\
		&+\kappa(\vec{\mathbf{x}}_1,\vec{\mathbf{x}}_3)\kappa(\vec{\mathbf{x}}_2,\vec{\mathbf{x}}_4)
		\nonumber\\
		&+\kappa(\vec{\mathbf{x}}_1,\vec{\mathbf{x}}_4)\kappa(\vec{\mathbf{x}}_2,\vec{\mathbf{x}}_3),
	\end{align}
		which is the usual three-pairing decomposition of a four-point environmental
	correlation function, consistent with general operator forms of Wick's theorem~\cite{FerialdiDiosi2021}.

	\subsection{Linear forms}

	\begin{proposition}
		\label{prop:linear-reduction}
		Let \(\vec{\mathbf x}\in\mathbb C^{4d}\). Then, in the sense of distributions,
		\begin{align}
			&\operatorname{tr}_B\Bigl[
			(\vec{\mathbf x}^{\mathsf T}\mathfrak A)\hat W(\vec z)\Bigr]
		=i\pi^d\vec p_{\vec{\mathbf x}}^{\mathsf T}
			\partial_{\vec z}\delta^{(2d)}(\vec z).
		\end{align}
	\end{proposition}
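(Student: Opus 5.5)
The plan is to obtain the identity as the \(m=1\) instance of the machinery already set up, and then check the coordinate term directly. Split \(\vec{\mathbf x}=(\vec x_L,\vec x_R)\), so that
\[
\vec{\mathbf x}^{\mathsf T}\mathfrak A=\vec x_L^{\mathsf T}\mathfrak a\,\bullet+\bullet\,\vec x_R^{\mathsf T}E\mathfrak a .
\]
Applying the left- and right-action proposition of Appendix~\ref{app:differential-representation}, with \(\vec h=\vec x_L\) for the left block and \(\vec h=E\vec x_R\) for the right block, gives
\[
(\vec{\mathbf x}^{\mathsf T}\mathfrak A)\hat W(\vec z)=\mathscr D_{\vec{\mathbf x}}\hat W(\vec z).
\]
Here the derivative terms add to \(\vec p_{\vec{\mathbf x}}^{\mathsf T}\partial_{\vec z}\) and the \(J\vec z\) terms combine with opposite signs into \(\tfrac12\vec q_{\vec{\mathbf x}}^{\mathsf T}J\vec z\), as in Eq.~\eqref{eq:def-D-script-x}.

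Next I take the trace. Since \(\mathscr D_{\vec{\mathbf x}}\) is a differential operator in the parameter \(\vec z\) with scalar coefficients, it commutes with \(\operatorname{tr}_B\) in the distributional sense (pair against a test function \(f(\vec z)\) and exchange trace and integral). Using \(\operatorname{tr}_B\hat W(\vec z)=\pi^d\delta^{(2d)}(\vec z)\), this yields
\[
\operatorname{tr}_B\bigl[(\vec{\mathbf x}^{\mathsf T}\mathfrak A)\hat W(\vec z)\bigr]=\pi^d\mathscr D_{\vec{\mathbf x}}\delta^{(2d)}(\vec z).
\]
Finally, the coordinate part drops out because \(z_j\delta^{(2d)}(\vec z)=0\): against any test function it gives \(z_jf(\vec z)|_{\vec z=0}=0\). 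What remains is \(i\pi^d\vec p_{\vec{\mathbf x}}^{\mathsf T}\partial_{\vec z}\delta^{(2d)}(\vec z)\). Equivalently, this is the \(m=1\) case of Lemma~\ref{lem:wick-delta}, where the only partial pairing is the empty one.

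The only delicate point, and the step I expect to be the main obstacle, is justifying the interchange of \(\mathscr D_{\vec{\mathbf x}}\) with the trace, given that \(\hat W(\vec z)\) is not trace class. I would handle it by interpreting every identity weakly. Smear with a Schwartz function \(f(\vec z)\), so that \(\int f\hat W\) is a trace-class operator. Then integrate by parts in \(\vec z\) to move the derivatives onto \(f\). This is exactly the sense in which Eq.~\eqref{eq:ordered-doubled-string} is already used in the paper, so I would simply invoke that convention.
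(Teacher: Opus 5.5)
Your proof is correct and follows essentially the same route as the paper. The paper cites the one-factor case of Eq.~\eqref{eq:ordered-doubled-string}, which you re-derive from the left/right action proposition with $\vec h=\vec x_L$ and $\vec h=E\vec x_R$, and then drops the coordinate term using $\vec z\,\delta^{(2d)}(\vec z)=0$; your smearing against Schwartz test functions is a sound way to make precise the distributional sense that the paper leaves implicit.
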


	\begin{proof}
		For one factor, Eq.~\eqref{eq:ordered-doubled-string} gives
		\(\pi^d\mathscr D_{\vec{\mathbf x}}\delta^{(2d)}(\vec z)\).
		The coordinate part of \(\mathscr D_{\vec{\mathbf x}}\) vanishes
		because \(\vec z\,\delta^{(2d)}(\vec z)=0\), leaving the stated derivative.
	\end{proof}

	\subsection{Product of two linear forms}

	\begin{proposition}
		\label{prop:linear-product-reduction}
		Let \(\vec{\mathbf x},\vec{\mathbf y}\in\mathbb C^{4d}\).
		Then, in the sense of distributions,
		\begin{align}
			&\operatorname{tr}_B\Bigl[
			(\vec{\mathbf x}^{\mathsf T}\mathfrak A)
			(\vec{\mathbf y}^{\mathsf T}\mathfrak A)\hat W(\vec z)\Bigr]
			\nonumber\\
			&=\pi^d\Bigl[
			-(\vec p_{\vec{\mathbf y}}^{\mathsf T}\partial_{\vec z})
			(\vec p_{\vec{\mathbf x}}^{\mathsf T}\partial_{\vec z})
			+\frac12\vec q_{\vec{\mathbf y}}^{\mathsf T}J\vec p_{\vec{\mathbf x}}
			\Bigr]\delta^{(2d)}(\vec z).
		\end{align}
	\end{proposition}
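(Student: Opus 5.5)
The plan is to specialize the ordered-string identity, Eq.~\eqref{eq:ordered-doubled-string}, to $m=2$, which gives
\[
\operatorname{tr}_B\bigl[(\vec{\mathbf x}^{\mathsf T}\mathfrak A)(\vec{\mathbf y}^{\mathsf T}\mathfrak A)\hat W(\vec z)\bigr]
=\pi^d\mathscr D_{\vec{\mathbf y}}\mathscr D_{\vec{\mathbf x}}\delta^{(2d)}(\vec z).
\]
The differential operators appear in reversed order relative to the multiplication superoperators. What remains is to compute $\mathscr D_{\vec{\mathbf y}}\mathscr D_{\vec{\mathbf x}}\delta^{(2d)}$ explicitly. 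This follows from the $m=2$ case of Lemma~\ref{lem:wick-delta} with $\mathscr D_1=\mathscr D_{\vec{\mathbf y}}$ and $\mathscr D_2=\mathscr D_{\vec{\mathbf x}}$. For transparency, I would also check it by hand as follows.

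First, as in the proof of Proposition~\ref{prop:linear-reduction}, apply $\mathscr D_{\vec{\mathbf x}}$ and drop its coordinate part using $\vec z\,\delta^{(2d)}(\vec z)=0$. This gives $i\vec p_{\vec{\mathbf x}}^{\mathsf T}\partial_{\vec z}\delta^{(2d)}$.

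Next, apply $\mathscr D_{\vec{\mathbf y}}=i\bigl(\vec p_{\vec{\mathbf y}}^{\mathsf T}\partial_{\vec z}+\tfrac12\vec q_{\vec{\mathbf y}}^{\mathsf T}J\vec z\bigr)$. The derivative part produces $i^2(\vec p_{\vec{\mathbf y}}^{\mathsf T}\partial_{\vec z})(\vec p_{\vec{\mathbf x}}^{\mathsf T}\partial_{\vec z})\delta^{(2d)}$, which is the first term with its minus sign.

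For the coordinate part, use the distributional identity $z_i\partial_j\delta^{(2d)}=-\delta_{ij}\delta^{(2d)}$ from the proof of Lemma~\ref{lem:wick-delta}. This identity can be verified directly by pairing with a test function $f$: $\int f\,z_i\partial_j\delta=-\partial_j(z_if)|_0=-\delta_{ij}f(0)$, which holds for Wirtinger derivatives since $\partial_{z_j}$ and $\partial_{z_j^*}$ act on the independent coordinates $z_i,z_i^*$ with $\partial_j z_i=\delta_{ij}$ in the vector convention. The coordinate part then gives
\[
i\cdot\tfrac12\,(\vec q_{\vec{\mathbf y}}^{\mathsf T}J)_i\; i\,(\vec p_{\vec{\mathbf x}})_j\,(-\delta_{ij})\,\delta^{(2d)}=\tfrac12\vec q_{\vec{\mathbf y}}^{\mathsf T}J\vec p_{\vec{\mathbf x}}\,\delta^{(2d)}.
\]
Multiplying both contributions by $\pi^d$ yields the claim.

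The only delicate point is bookkeeping. The operator ordering must be reversed, so that the contraction is $\vec q$ of the later factor $\vec{\mathbf y}$ against $\vec p$ of the earlier factor $\vec{\mathbf x}$. In addition, the Wirtinger identity $\partial_{z_j}z_i=\delta_{ij}$ must hold for the component $z_{d+k}=z_k^*$. This is consistent with the Lemma's pairing coefficient $\tfrac12\vec q_r^{\mathsf T}J\vec p_s$ with $r=1\leftrightarrow\vec{\mathbf y}$ and $s=2\leftrightarrow\vec{\mathbf x}$.
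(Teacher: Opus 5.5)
Your proposal is correct and follows essentially the same route as the paper. You reduce to $\pi^d\mathscr D_{\vec{\mathbf y}}\mathscr D_{\vec{\mathbf x}}\delta^{(2d)}(\vec z)$ via the ordered-string identity, drop the coordinate part of the rightmost factor using $\vec z\,\delta^{(2d)}(\vec z)=0$, and obtain the contraction $\tfrac12\vec q_{\vec{\mathbf y}}^{\mathsf T}J\vec p_{\vec{\mathbf x}}$ from $z_i\partial_j\delta^{(2d)}=-\delta_{ij}\delta^{(2d)}$; your test-function check of that identity and of the factors of $i$ is a useful verification rather than a different argument.
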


	\begin{proof}
		The ordered-string identity gives
		\(\pi^d\mathscr D_{\vec{\mathbf y}}\mathscr D_{\vec{\mathbf x}}
		\delta^{(2d)}(\vec z)\).
		By Proposition~\ref{prop:linear-reduction}, the action of the
		rightmost differential operator reduces to
		\(i\vec p_{\vec{\mathbf x}}^{\mathsf T}\partial_{\vec z}
		\delta^{(2d)}(\vec z)\).
		Applying \(\mathscr D_{\vec{\mathbf y}}\) and using
		\begin{equation}
			(\vec q_{\vec{\mathbf y}}^{\mathsf T}J\vec z)
			(\vec p_{\vec{\mathbf x}}^{\mathsf T}\partial_{\vec z})
			\delta^{(2d)}(\vec z)
			=-\vec q_{\vec{\mathbf y}}^{\mathsf T}J\vec p_{\vec{\mathbf x}}
			\delta^{(2d)}(\vec z)
		\end{equation}
		yields the derivative term and the ordering contraction.
	\end{proof}

	\subsection{Quadratic forms}

	Proposition~\ref{prop:linear-reduction} removes the coordinate term
	for a single linear factor, while
	Proposition~\ref{prop:linear-product-reduction} retains the contraction
	generated by a second factor. Expanding a quadratic form into products
	of two linear forms and summing the latter result gives the following
	matrix formula.

	\begin{proposition}
		\label{prop:quadratic-reduction}
		Let \(\mathbf{X}\in\mathbb C^{4d\times4d}.\) Then
		\begin{align}
			&
			\operatorname{tr}_B
			\Bigl[
			\mathfrak A^{\mathsf T}
			\mathbf{X}
			\mathfrak A\,
			\hat W(\vec{z})
			\Bigr]
			\nonumber\\
			&=
			\pi^d
			\Bigg[
			-
			\partial_{\vec{z}}^{\mathsf T}
			P\mathbf{X}^{\mathsf T}P^{\mathsf T}
			\partial_{\vec{z}}
			+
			\frac12
			\operatorname{Tr}
			\Bigl[
			J^{\mathsf T}
			Q
			\mathbf{X}^{\mathsf T}
			P^{\mathsf T}
			\Bigr]
			\Bigg]
			\delta^{(2d)}(\vec{z}).
		\end{align}
	\end{proposition}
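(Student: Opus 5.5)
The plan is to reduce the quadratic form to a sum of products of two linear forms and apply Proposition~\ref{prop:linear-product-reduction} term by term. Let \(\vec{\mathbf e}_a\), \(a=1,\ldots,4d\), be the standard basis of \(\mathbb C^{4d}\). Use the rank-one decomposition \(\mathbf X=\sum_{a,b}\mathbf X_{ab}\,\vec{\mathbf e}_a\vec{\mathbf e}_b^{\mathsf T}\), i.e.\ Eq.~\eqref{eq:rank-one-decomposition} with \(\vec{\mathbf x}_{(ab)}=\vec{\mathbf e}_a\) and \(\vec{\mathbf y}_{(ab)}=\mathbf X_{ab}\vec{\mathbf e}_b\). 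Then
\[
\mathfrak A^{\mathsf T}\mathbf X\mathfrak A=\sum_{a,b}(\vec{\mathbf e}_a^{\mathsf T}\mathfrak A)(\mathbf X_{ab}\vec{\mathbf e}_b^{\mathsf T}\mathfrak A),
\]
and this holds without assuming \(\mathbf X\) symmetric, since the order of the two factors is kept as written.

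Next, I would apply Proposition~\ref{prop:linear-product-reduction} to each ordered pair, with \(\vec p_{\vec{\mathbf x}}=P\vec{\mathbf e}_a\), \(\vec p_{\vec{\mathbf y}}=\mathbf X_{ab}P\vec{\mathbf e}_b\) and \(\vec q_{\vec{\mathbf y}}=\mathbf X_{ab}Q\vec{\mathbf e}_b\). Each term contributes
\[
\pi^d\Bigl[-\mathbf X_{ab}(\partial_{\vec z}^{\mathsf T}P\vec{\mathbf e}_b)(\vec{\mathbf e}_a^{\mathsf T}P^{\mathsf T}\partial_{\vec z})+\tfrac12\mathbf X_{ab}\,\vec{\mathbf e}_b^{\mathsf T}Q^{\mathsf T}JP\vec{\mathbf e}_a\Bigr]\delta^{(2d)}(\vec z).
\]
The derivative operators have constant coefficients and commute, so the first product can be rewritten freely as a bilinear form in \(\partial_{\vec z}\).

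The remaining step is index bookkeeping. The identity \(\sum_{a,b}\mathbf X_{ab}\vec{\mathbf e}_b\vec{\mathbf e}_a^{\mathsf T}=\mathbf X^{\mathsf T}\) collapses the derivative terms to \(-\partial_{\vec z}^{\mathsf T}P\mathbf X^{\mathsf T}P^{\mathsf T}\partial_{\vec z}\). For the contraction terms, set \(M=Q^{\mathsf T}JP\). Then
\[
\sum_{a,b}\mathbf X_{ab}M_{ba}=\operatorname{Tr}[\mathbf X Q^{\mathsf T}JP].
\]
Taking the transpose inside the trace gives \(\operatorname{Tr}[P^{\mathsf T}J^{\mathsf T}Q\mathbf X^{\mathsf T}]\), and cyclicity turns this into \(\operatorname{Tr}[J^{\mathsf T}Q\mathbf X^{\mathsf T}P^{\mathsf T}]\). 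This is the stated constant term with its factor \(\tfrac12\).

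I expect the only real obstacle to be tracking the orientation. The contraction in Proposition~\ref{prop:linear-product-reduction} pairs \(\vec q\) of the \emph{later} factor \(\vec{\mathbf y}\) with \(\vec p\) of the \emph{earlier} factor \(\vec{\mathbf x}\). This is why \(\mathbf X^{\mathsf T}\), rather than \(\mathbf X\), appears in the final formula, and I would check this step carefully for general nonsymmetric \(\mathbf X\).
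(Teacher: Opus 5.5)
Your proposal is correct and follows essentially the same argument as the paper: expand \(\mathfrak A^{\mathsf T}\mathbf X\mathfrak A\) via a rank-one decomposition, apply Proposition~\ref{prop:linear-product-reduction} term by term, and collect the derivative coefficients into \(P\mathbf X^{\mathsf T}P^{\mathsf T}\) and the contractions into \(\operatorname{Tr}[J^{\mathsf T}Q\mathbf X^{\mathsf T}P^{\mathsf T}]\). The only difference is that you fix the elementary-matrix decomposition \(\mathbf X=\sum_{a,b}\mathbf X_{ab}\vec{\mathbf e}_a\vec{\mathbf e}_b^{\mathsf T}\) and sum indices explicitly, whereas the paper keeps a general decomposition and uses \(\sum_\alpha\vec{\mathbf y}_\alpha\vec{\mathbf x}_\alpha^{\mathsf T}=\mathbf X^{\mathsf T}\); your orientation bookkeeping, which is what produces \(\mathbf X^{\mathsf T}\) rather than \(\mathbf X\), is handled correctly.
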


	\begin{proof}
		Using Eq.~\eqref{eq:rank-one-decomposition} and
		Eq.~\eqref{eq:ordered-doubled-string},
		\[
		\operatorname{tr}_B
		\Bigl[
		\mathfrak A^{\mathsf T}
		\mathbf{X}
		\mathfrak A\,
		\hat W(\vec{z})
		\Bigr]
		=
		\pi^d
		\sum_\alpha
		\mathscr D_{\vec{\mathbf{y}}_\alpha}
		\mathscr D_{\vec{\mathbf{x}}_\alpha}
		\delta^{(2d)}(\vec{z}).
		\]

		Applying Proposition~\ref{prop:linear-product-reduction} term by term,
		we collect the derivative coefficients as
		\begin{align}
			\sum_\alpha
			\vec{p}_{\vec{\mathbf{y}}_\alpha}
			\vec{p}_{\vec{\mathbf{x}}_\alpha}^{\mathsf T}
			=
			P
			\left(
			\sum_\alpha
			\vec{\mathbf{y}}_\alpha \vec{\mathbf{x}}_\alpha^{\mathsf T}
			\right)
			P^{\mathsf T}
			=
			P\mathbf{X}^{\mathsf T}P^{\mathsf T},
		\end{align}
		while the contraction term gives
		\begin{align}
			\sum_\alpha
			\vec{q}_{\vec{\mathbf{y}}_\alpha}^{\mathsf T}
			J
			\vec{p}_{\vec{\mathbf{x}}_\alpha}
			=
			\sum_\alpha
			\operatorname{Tr}
			\Bigl[
			J^{\mathsf T}
			\vec{q}_{\vec{\mathbf{y}}_\alpha}
			\vec{p}_{\vec{\mathbf{x}}_\alpha}^{\mathsf T}
			\Bigr]
			=
			\operatorname{Tr}
			\Bigl[
			J^{\mathsf T}
			Q
			\mathbf{X}^{\mathsf T}
			P^{\mathsf T}
			\Bigr].
		\end{align}

		Combining both contributions proves the statement.
	\end{proof}

	\subsection{Product of two quadratic forms}

	\begin{proposition}
		\label{prop:quadratic-product-reduction}
		Let \(\mathbf{X}_1,\mathbf{X}_2\in\mathbb C^{4d\times4d},\)
			and let \(\mathbf{R}=Q^{\mathsf T}JP.\)
		Then, in the sense of distributions,
		\begin{align}
			&
			\operatorname{tr}_B
			\Bigl[
			\left(
			\mathfrak A^{\mathsf T}\mathbf{X}_1\mathfrak A
			\right)
			\left(
			\mathfrak A^{\mathsf T}\mathbf{X}_2\mathfrak A
			\right)
			\hat W(\vec{z})
			\Bigr]
			\nonumber\\
			&=
			\pi^d
			\Bigg\{
			\left(
			\partial_{\vec{z}}^{\mathsf T}
			P\mathbf{X}_2^{\mathsf T}P^{\mathsf T}
			\partial_{\vec{z}}
			\right)
			\left(
			\partial_{\vec{z}}^{\mathsf T}
			P\mathbf{X}_1^{\mathsf T}P^{\mathsf T}
			\partial_{\vec{z}}
			\right)
			\nonumber\\
				&\quad
				-
				\frac12
			\operatorname{Tr}
			\Bigl[
			J^{\mathsf T}Q\mathbf{X}_2^{\mathsf T}P^{\mathsf T}
			\Bigr]
			\partial_{\vec{z}}^{\mathsf T}
			P\mathbf{X}_1^{\mathsf T}P^{\mathsf T}
			\partial_{\vec{z}}
			\nonumber\\
			&\quad
				-
				\frac12
			\operatorname{Tr}
			\Bigl[
			J^{\mathsf T}Q\mathbf{X}_1^{\mathsf T}P^{\mathsf T}
			\Bigr]
			\partial_{\vec{z}}^{\mathsf T}
			P\mathbf{X}_2^{\mathsf T}P^{\mathsf T}
			\partial_{\vec{z}}
			\nonumber\\
			&\quad
				-
				\frac12
			\partial_{\vec{z}}^{\mathsf T}
			P
			(\mathbf{X}_2+\mathbf{X}_2^{\mathsf T})\mathbf{R}(\mathbf{X}_1+\mathbf{X}_1^{\mathsf T})
			P^{\mathsf T}
			\partial_{\vec{z}}
			\nonumber\\
			&\quad
			+
			\frac14
			\operatorname{Tr}
			\Bigl[
			J^{\mathsf T}Q\mathbf{X}_2^{\mathsf T}P^{\mathsf T}
			\Bigr]
			\operatorname{Tr}
			\Bigl[
			J^{\mathsf T}Q\mathbf{X}_1^{\mathsf T}P^{\mathsf T}
			\Bigr]
			\nonumber\\
			&\quad
			+
			\frac14
			\operatorname{Tr}
			\Bigl[
			\mathbf{R}\mathbf{X}_1\mathbf{R}^{\mathsf T}\mathbf{X}_2^{\mathsf T}
			\Bigr]
			\nonumber\\
			&\quad+
			\frac14
			\operatorname{Tr}
			\Bigl[
			\mathbf{R}\mathbf{X}_1^{\mathsf T}\mathbf{R}^{\mathsf T}\mathbf{X}_2^{\mathsf T}
			\Bigr]
			\Bigg\}
			\delta^{(2d)}(\vec{z}).
		\end{align}
	\end{proposition}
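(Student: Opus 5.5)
We reduce the statement to Lemma~\ref{lem:wick-delta} with four linear factors, exactly as Proposition~\ref{prop:quadratic-reduction} was reduced to Proposition~\ref{prop:linear-product-reduction}. Choose rank-one decompositions \(\mathbf{X}_1=\sum_\alpha\vec{\mathbf{x}}_\alpha\vec{\mathbf{y}}_\alpha^{\mathsf T}\) and \(\mathbf{X}_2=\sum_\beta\vec{\mathbf{u}}_\beta\vec{\mathbf{w}}_\beta^{\mathsf T}\) as in Eq.~\eqref{eq:rank-one-decomposition}. By Eq.~\eqref{eq:ordered-doubled-string}, the trace equals
\[
\pi^d\sum_{\alpha,\beta}\mathscr D_{\vec{\mathbf{w}}_\beta}\mathscr D_{\vec{\mathbf{u}}_\beta}\mathscr D_{\vec{\mathbf{y}}_\alpha}\mathscr D_{\vec{\mathbf{x}}_\alpha}\delta^{(2d)}(\vec z).
\]
We then apply Lemma~\ref{lem:wick-delta} with \(m=4\) to the ordered string \((\mathscr D_1,\ldots,\mathscr D_4)=(\mathscr D_{\vec{\mathbf{w}}},\mathscr D_{\vec{\mathbf{u}}},\mathscr D_{\vec{\mathbf{y}}},\mathscr D_{\vec{\mathbf{x}}})\). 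For a pair with \(r<s\) in this differential order, the contraction factor is \(\frac12\vec q_r^{\mathsf T}J\vec p_s\): the \(q\)-vector comes from the later original factor and the \(p\)-vector from the earlier one.

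The partial pairings of four indices split into four classes.
\begin{enumerate}
\item The empty pairing gives \(i^4\) times four derivatives. After summing over \(\alpha,\beta\), the identity \(\sum\vec p_{\vec{\mathbf{y}}}\vec p_{\vec{\mathbf{x}}}^{\mathsf T}=P\mathbf{X}^{\mathsf T}P^{\mathsf T}\) from the proof of Proposition~\ref{prop:quadratic-reduction} yields the product of the two second-order forms.
\item A single intra-factor pair, \(\{w,u\}\) or \(\{y,x\}\), leaves \(i^2=-1\) times the derivatives of the other factor. The trace identity from Proposition~\ref{prop:quadratic-reduction} turns this into the two terms \(-\frac12\operatorname{Tr}[J^{\mathsf T}Q\mathbf{X}^{\mathsf T}P^{\mathsf T}]\,\partial^{\mathsf T}P\mathbf{X}'^{\mathsf T}P^{\mathsf T}\partial\).
\item A single cross pair links one of \(\{w,u\}\) with one of \(\{y,x\}\), which gives four choices. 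Each leaves two derivatives with factor \(-1\). For instance, the pair \(\{u,x\}\) with \(w,y\) left free contributes \(-\frac12(\vec p_{\vec{\mathbf{w}}}^{\mathsf T}\partial)(\vec p_{\vec{\mathbf{y}}}^{\mathsf T}\partial)\,\vec q_{\vec{\mathbf{u}}}^{\mathsf T}J\vec p_{\vec{\mathbf{x}}}\). We write \(\vec q^{\mathsf T}J\vec p=\vec{\mathbf{u}}^{\mathsf T}\mathbf R\vec{\mathbf{x}}\) with \(\mathbf R=Q^{\mathsf T}JP\). Summing the four choices, the contraction vector can attach to either slot of each rank-one factor. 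This symmetrizes each matrix and produces \(-\frac12\partial^{\mathsf T}P(\mathbf{X}_2+\mathbf{X}_2^{\mathsf T})\mathbf R(\mathbf{X}_1+\mathbf{X}_1^{\mathsf T})P^{\mathsf T}\partial\). Because the derivatives commute, the two free vectors can be placed on the two sides of the quadratic form.
\item Complete pairings give constants. The intra-factor pairing \(\{w,u\}\{y,x\}\) yields the product of traces with factor \(\frac14\). The two crossed pairings \(\{w,y\}\{u,x\}\) and \(\{w,x\}\{u,y\}\) become traces of the form \(\frac14\operatorname{Tr}[\mathbf R\mathbf{X}_1^{(\mathsf T)}\mathbf R^{\mathsf T}\mathbf{X}_2^{\mathsf T}]\). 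To get them, rewrite \(\sum_{\alpha\beta}(\vec{\mathbf{w}}^{\mathsf T}\mathbf R\vec{\mathbf{y}})(\vec{\mathbf{u}}^{\mathsf T}\mathbf R\vec{\mathbf{x}})\) as a trace and use the transpose rule \(\vec{\mathbf{u}}^{\mathsf T}\mathbf R\vec{\mathbf{x}}=\vec{\mathbf{x}}^{\mathsf T}\mathbf R^{\mathsf T}\vec{\mathbf{u}}\) to cycle.
\end{enumerate}

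The main obstacle is bookkeeping of orientation and transposes. We must check that in each pair the \(q\)-vector always belongs to the later original factor. We must also check that each cross term comes out with \(\mathbf R\) rather than \(\mathbf R^{\mathsf T}\), and with the stated pattern of \(\mathbf X\) versus \(\mathbf X^{\mathsf T}\), consistent with the convention \(\sum\vec{\mathbf y}\vec{\mathbf x}^{\mathsf T}=\mathbf X^{\mathsf T}\). We would do this by writing each of the three complete pairings and the four cross pairings explicitly in index form, \(\sum_{\alpha\beta}\), and only then resumming into matrices. The powers of \(i\) need no separate tracking: they are fixed by the number of unpaired indices, namely \(i^4=1\), \(i^2=-1\) and \(i^0=1\).
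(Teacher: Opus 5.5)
Your proposal is correct and follows essentially the same route as the paper: rank-one decompositions of both matrices, the order-reversing string identity, and Lemma~\ref{lem:wick-delta} with four factors. The paper sorts the pairings the same way, into the uncontracted term, the two internal contractions, the four single cross-contractions, and the three complete pairings. Your index-level bookkeeping (for example, using $\vec{\mathbf w}^{\mathsf T}\mathbf R\vec{\mathbf y}=\vec{\mathbf y}^{\mathsf T}\mathbf R^{\mathsf T}\vec{\mathbf w}$ to bring the crossed complete pairings into the stated $\operatorname{Tr}[\mathbf R\mathbf X_1^{(\mathsf T)}\mathbf R^{\mathsf T}\mathbf X_2^{\mathsf T}]$ form) fills in orientation and transpose checks that the paper leaves implicit.
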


	\begin{proof}
		Choose rank-one decompositions
		\begin{equation}
			\mathbf{X}_1
			=
			\sum_{\alpha}
			\vec{\mathbf{x}}_{\alpha}\vec{\mathbf{y}}_{\alpha}^{\mathsf T},
			\qquad
			\mathbf{X}_2
			=
			\sum_{\beta}
			\vec{\mathbf{x}}_{\beta}\vec{\mathbf{y}}_{\beta}^{\mathsf T}.
		\end{equation}
		Then
		\begin{align}
			\left(
			\mathfrak A^{\mathsf T}\mathbf{X}_1\mathfrak A
			\right)
			\left(
			\mathfrak A^{\mathsf T}\mathbf{X}_2\mathfrak A
			\right)
			=
			\sum_{\alpha,\beta}
			(\vec{\mathbf{x}}_{\alpha}^{\mathsf T}\mathfrak A)
			(\vec{\mathbf{y}}_{\alpha}^{\mathsf T}\mathfrak A)
			(\vec{\mathbf{x}}_{\beta}^{\mathsf T}\mathfrak A)
			(\vec{\mathbf{y}}_{\beta}^{\mathsf T}\mathfrak A).
		\end{align}
		Due to the reversal of the order in the differential representation,
		\begin{equation}
			(\vec{\mathbf{x}}_{\alpha}^{\mathsf T}\mathfrak A)
			(\vec{\mathbf{y}}_{\alpha}^{\mathsf T}\mathfrak A)
			(\vec{\mathbf{x}}_{\beta}^{\mathsf T}\mathfrak A)
			(\vec{\mathbf{y}}_{\beta}^{\mathsf T}\mathfrak A)
			\longmapsto
			\mathscr D_{\vec{\mathbf{y}}_\beta}
			\mathscr D_{\vec{\mathbf{x}}_\beta}
			\mathscr D_{\vec{\mathbf{y}}_\alpha}
			\mathscr D_{\vec{\mathbf{x}}_\alpha}.
		\end{equation}
			Therefore
		\begin{align}
			&
			\operatorname{tr}_B
			\Bigl[
			\left(
			\mathfrak A^{\mathsf T}\mathbf{X}_1\mathfrak A
			\right)
			\left(
			\mathfrak A^{\mathsf T}\mathbf{X}_2\mathfrak A
			\right)
			\hat W(\vec{z})
			\Bigr]
			\nonumber\\
			&=
			\pi^d
			\sum_{\alpha,\beta}
			\mathscr D_{\vec{\mathbf{y}}_\beta}
			\mathscr D_{\vec{\mathbf{x}}_\beta}
			\mathscr D_{\vec{\mathbf{y}}_\alpha}
			\mathscr D_{\vec{\mathbf{x}}_\alpha}
			\delta^{(2d)}(\vec{z}).
		\end{align}

		Applying Lemma~\ref{lem:wick-delta} to this ordered product gives
		the following contributions.
		The term without contractions gives
		\begin{equation}
			\left(
			\partial_{\vec{z}}^{\mathsf T}
			P\mathbf{X}_2^{\mathsf T}P^{\mathsf T}
			\partial_{\vec{z}}
			\right)
			\left(
			\partial_{\vec{z}}^{\mathsf T}
			P\mathbf{X}_1^{\mathsf T}P^{\mathsf T}
			\partial_{\vec{z}}
			\right)
			\delta^{(2d)}(\vec{z}).
		\end{equation}

		The contractions inside each quadratic form give
		\begin{align}
				&
				-\frac12
			\operatorname{Tr}
			\Bigl[
			J^{\mathsf T}Q\mathbf{X}_2^{\mathsf T}P^{\mathsf T}
			\Bigr]
			\partial_{\vec{z}}^{\mathsf T}
			P\mathbf{X}_1^{\mathsf T}P^{\mathsf T}
			\partial_{\vec{z}}\,\delta^{(2d)}(\vec{z})
			\nonumber\\
				&\quad
				-
				\frac12
			\operatorname{Tr}
			\Bigl[
			J^{\mathsf T}Q\mathbf{X}_1^{\mathsf T}P^{\mathsf T}
			\Bigr]
			\partial_{\vec{z}}^{\mathsf T}
			P\mathbf{X}_2^{\mathsf T}P^{\mathsf T}
			\partial_{\vec{z}}\,\delta^{(2d)}(\vec{z}).
		\end{align}

		The four possible single cross-contractions give
		\begin{align}
				&
				-\frac12
				\partial_{\vec{z}}^{\mathsf T}
			P
			(\mathbf{X}_2+\mathbf{X}_2^{\mathsf T})\mathbf{R}(\mathbf{X}_1+\mathbf{X}_1^{\mathsf T})
			P^{\mathsf T}
			\partial_{\vec{z}}\,\delta^{(2d)}(\vec{z}),
		\end{align}
		with \(\mathbf{R}\) as defined in the proposition.

		Finally, the double contractions consist of the product of the two internal
		contractions and two cross-contraction pairings. They give
		\begin{align}
			&
			\frac14
			\operatorname{Tr}
			\Bigl[
			J^{\mathsf T}Q\mathbf{X}_2^{\mathsf T}P^{\mathsf T}
			\Bigr]
			\operatorname{Tr}
			\Bigl[
			J^{\mathsf T}Q\mathbf{X}_1^{\mathsf T}P^{\mathsf T}
			\Bigr]
			\delta^{(2d)}(\vec{z})
			\nonumber\\
			&\quad
			+
			\frac14
			\operatorname{Tr}
			\Bigl[
			\mathbf{R}\mathbf{X}_1\mathbf{R}^{\mathsf T}\mathbf{X}_2^{\mathsf T}
			\Bigr]
			\delta^{(2d)}(\vec{z})
			\nonumber\\
			&\quad+
			\frac14
			\operatorname{Tr}
			\Bigl[
			\mathbf{R}\mathbf{X}_1^{\mathsf T}\mathbf{R}^{\mathsf T}\mathbf{X}_2^{\mathsf T}
			\Bigr]
			\delta^{(2d)}(\vec{z}).
		\end{align}
		Combining all contributions yields the stated formula.
	\end{proof}

	\section{\label{app:single-mode-worked-example}Single-mode calculations}

	We collect the explicit matrices used in the examples in
	Sec.~\ref{sec:single-mode-worked-example}, followed by the results for
	the other combinations of factors. All derivatives act on the initial
	characteristic function, and all two-time formulas assume
	\(0\leq t_1\leq t\).

	\subsection{Multiplication superoperators and doubled matrices}

	We use the operator vectors and generator matrices specified in
	Sec.~\ref{sec:single-mode-worked-example}.
	Evaluating the matrix exponential in Eq.~\eqref{eq:doubled-linear-propagator}
	using \(\mathbf L\) from Eq.~\eqref{eq:single-mode-liouvillian-matrix} gives
\begin{widetext}
\begin{equation}
 \mathbf{G}(t)=e^{-\mathbf{J}\mathbf{L}t}=
 \begin{pmatrix}
 e^{-(\gamma/2+i\omega_B)t}&0&0&0\\
 0&e^{(\gamma/2+i\omega_B)t}
 &\bigl(e^{-\gamma t/2}-e^{\gamma t/2}\bigr)e^{i\omega_B t}&0\\
 0&0&e^{-(\gamma/2-i\omega_B)t}&0\\
 \bigl(e^{-\gamma t/2}-e^{\gamma t/2}\bigr)e^{-i\omega_B t}
 &0&0&e^{(\gamma/2-i\omega_B)t}
 \end{pmatrix}.
 \label{eq:single-mode-doubled-propagator}
\end{equation}
\end{widetext}
	The matrices entering the differential reduction are
	\begin{equation}
		P=\begin{pmatrix}1&0&0&1\\0&1&1&0\end{pmatrix},
		\qquad
		Q=\begin{pmatrix}1&0&0&-1\\0&1&-1&0\end{pmatrix},
	\end{equation}

	Let \(\vec{\mathbf e}_j\) denote the \(j\)-th coordinate vector of
	\(\mathbb C^4\).
	For any ordered string of the doubled components, we apply
	Proposition~\ref{prop:environmental-moments} with the coefficient
	vectors \(\vec{\mathbf e}_j(t)=\mathbf G(t)^{\mathsf T}\vec{\mathbf e}_j\).
	The required derivative and contraction vectors follow directly as
	\begin{equation}
		\begin{aligned}
		\vec p_{\vec{\mathbf e}_j(t)}&=P\mathbf G(t)^{\mathsf T}\vec{\mathbf e}_j,\\
		\vec q_{\vec{\mathbf e}_j(t)}&=Q\mathbf G(t)^{\mathsf T}\vec{\mathbf e}_j.
		\end{aligned}
	\end{equation}
	Thus the general reduction applies with the explicit matrices above,
	using the same characteristic function as in the main text. Quadratic factors use
	\(\mathbf X(t)=\mathbf G(t)^{\mathsf T}\mathbf X\mathbf G(t)\).
	\subsection{Moments for linear interaction factors}
	For one factor, substituting
	\(\vec{\mathbf x}=\vec{\mathbf e}_j(t)\) into
	Proposition~\ref{prop:linear-reduction} and integrating against
	\(\chi(\vec z)\) gives
	\begin{align}
		\operatorname{tr}_B[\mathfrak A_1(t)\hat\rho_B]
		&=\operatorname{tr}_B[\mathfrak A_4(t)\hat\rho_B]
		\nonumber\\
		&=-ie^{-(\gamma/2+i\omega_B)t}
		\left.\partial_z\chi\right|_{\vec z=0},\\
		\operatorname{tr}_B[\mathfrak A_2(t)\hat\rho_B]
		&=\operatorname{tr}_B[\mathfrak A_3(t)\hat\rho_B]
		\nonumber\\
		&=-ie^{-(\gamma/2-i\omega_B)t}
		\left.\partial_{z^*}\chi\right|_{\vec z=0}.
	\end{align}
	For two factors, Proposition~\ref{prop:linear-product-reduction} with
	\(\vec{\mathbf x}=\vec{\mathbf e}_i(t)\) and
	\(\vec{\mathbf y}=\vec{\mathbf e}_j(t_1)\) gives
	Eq.~\eqref{eq:single-mode-linear-two-time} for \(i=1,j=2\).
	The remaining combinations are
	\begin{subequations}
	\label{eq:single-mode-linear-other-moments}
	\begin{align}
		&\operatorname{tr}_B[\mathfrak A_2(t)\mathfrak A_1(t_1)\hat\rho_B]
		\nonumber\\
		&\quad=e^{-(\gamma/2-i\omega_B)(t-t_1)}e^{-\gamma t_1}
		\nonumber\\
		&\qquad{}\times
		\left[-\left.\partial_z\partial_{z^*}\chi\right|_{\vec z=\vec{0}}-\frac12\right],\\
		&\operatorname{tr}_B[\mathfrak A_1(t)\mathfrak A_3(t_1)\hat\rho_B]
		\nonumber\\
		&\quad=e^{-(\gamma/2+i\omega_B)(t-t_1)}e^{-\gamma t_1}
		\nonumber\\
		&\qquad{}\times
		\left[-\left.\partial_z\partial_{z^*}\chi\right|_{\vec z=\vec{0}}-\frac12\right],\\
		&\operatorname{tr}_B[\mathfrak A_2(t)\mathfrak A_4(t_1)\hat\rho_B]
		\nonumber\\
		&\quad=e^{-(\gamma/2-i\omega_B)(t-t_1)}
		\nonumber\\
		&\qquad{}\times
		\left[-e^{-\gamma t_1}\left.\partial_z\partial_{z^*}\chi\right|_{\vec z=\vec{0}}
		+1-\frac12e^{-\gamma t_1}\right],\\
		&\operatorname{tr}_B[\mathfrak A_1(t)\mathfrak A_1(t_1)\hat\rho_B]
		\nonumber\\
		&\quad=\operatorname{tr}_B[\mathfrak A_1(t)\mathfrak A_4(t_1)\hat\rho_B]
		\nonumber\\
		&\quad=-e^{-(\gamma/2+i\omega_B)(t+t_1)}
		\left.\partial_z^2\chi\right|_{\vec z=\vec{0}},\\
		&\operatorname{tr}_B[\mathfrak A_2(t)\mathfrak A_2(t_1)\hat\rho_B]
		\nonumber\\
		&\quad=\operatorname{tr}_B[\mathfrak A_2(t)\mathfrak A_3(t_1)\hat\rho_B]
		\nonumber\\
		&\quad=-e^{-(\gamma/2-i\omega_B)(t+t_1)}
		\left.\partial_{z^*}^{\,2}\chi\right|_{\vec z=\vec{0}}.
	\end{align}
	\end{subequations}
	Inside the trace, the latest factor \(\mathfrak A_1(t)\) can be
	replaced by \(\mathfrak A_4(t)\), and \(\mathfrak A_2(t)\) by
	\(\mathfrak A_3(t)\). This follows from trace preservation and
	cyclicity, and supplies all remaining left--right combinations.

	\subsection{Moments for quadratic interaction factors}

	We now apply the matrix reduction formulas to the quadratic factors.
	The matrices
	\begin{equation}
		\mathbf X_1=\vec{\mathbf e}_1\vec{\mathbf e}_1^{\mathsf T},
		\qquad
		\mathbf X_2=\vec{\mathbf e}_2\vec{\mathbf e}_2^{\mathsf T}
	\end{equation}
	select the squares of the first and second doubled components, with
	\begin{equation}
		\mathbf X_j(t)=\mathbf G(t)^{\mathsf T}\mathbf X_j\mathbf G(t),
		\qquad j=1,2.
	\end{equation}
	For a single factor, Proposition~\ref{prop:quadratic-reduction} requires
	the derivative matrix and the internal contraction. For \(\mathbf X_1(t)\),
	these are
	\begin{align}
		P\mathbf X_1(t)^{\mathsf T}P^{\mathsf T}
		&=
		\begin{pmatrix}e^{-(\gamma+2i\omega_B)t}&0\\0&0\end{pmatrix},\\
		\operatorname{Tr}\!\left[J^{\mathsf T}Q\mathbf X_1(t)^{\mathsf T}P^{\mathsf T}\right]
		&=0.
	\end{align}
As a result
	\begin{align}
		\operatorname{tr}_B[\mathfrak A_1(t)^2\hat\rho_B]
=-e^{-(\gamma+2i\omega_B)t}
		\left.\partial_z^2\chi(\vec z)\right|_{\vec z=\vec{0}}.
	\end{align}
	Applying the same proposition to the remaining squared components gives
	the single-factor results
	\begin{subequations}
	\label{eq:single-mode-quadratic-one-time}
	\begin{align}
		\operatorname{tr}_B[\mathfrak A_1(t)^2\hat\rho_B]
		&=\operatorname{tr}_B[\mathfrak A_4(t)^2\hat\rho_B]
		\nonumber\\
		&=-e^{-(\gamma+2i\omega_B)t}\left.\partial_z^2\chi\right|_{\vec z=\vec{0}},\\
		\operatorname{tr}_B[\mathfrak A_2(t)^2\hat\rho_B]
		&=\operatorname{tr}_B[\mathfrak A_3(t)^2\hat\rho_B]
		\nonumber\\
		&=-e^{-(\gamma-2i\omega_B)t}\left.\partial_{z^*}^{\,2}\chi\right|_{\vec z=\vec{0}}.
	\end{align}
	\end{subequations}

	For the ordered product \(\mathfrak A_1(t)^2\mathfrak A_2(t_1)^2\),
	we apply Proposition~\ref{prop:quadratic-product-reduction} with
	\(\mathbf X_1(t)\) as its first matrix and \(\mathbf X_2(t_1)\) as
	its second matrix. In addition to the matrices above, we need
	\begin{align}
		P\mathbf X_2(t_1)^{\mathsf T}P^{\mathsf T}
		&=
		\begin{pmatrix}0&0\\0&e^{-(\gamma-2i\omega_B)t_1}\end{pmatrix},\\
		\operatorname{Tr}\!\left[J^{\mathsf T}Q\mathbf X_2(t_1)^{\mathsf T}P^{\mathsf T}\right]
		&=0.
	\end{align}
	Both internal contractions vanish. Since the two propagated matrices
	are symmetric, the remaining terms of the proposition simplify to
	the following three contributions. The product of the derivative
	parts gives
	\begin{align}
		&\left(\partial_{\vec z}^{\mathsf T}P\mathbf X_2(t_1)^{\mathsf T}P^{\mathsf T}\partial_{\vec z}\right)
		\left(\partial_{\vec z}^{\mathsf T}P\mathbf X_1(t)^{\mathsf T}P^{\mathsf T}\partial_{\vec z}\right)
		\nonumber\\
		&=e^{-(\gamma+2i\omega_B)(t-t_1)}e^{-2\gamma t_1}
		\partial_z^2\partial_{z^*}^{\,2}.
	\end{align}
	The single cross-contractions give
	\begin{align}
		&-2\partial_{\vec z}^{\mathsf T}P\mathbf X_2(t_1)\mathbf R\mathbf X_1(t)P^{\mathsf T}\partial_{\vec z}
		\nonumber\\
		&=e^{-(\gamma+2i\omega_B)(t-t_1)}
		2e^{-\gamma t_1}(e^{-\gamma t_1}-2)\partial_z\partial_{z^*},
	\end{align}
	and the two double cross-contractions give
	\begin{align}
		&\frac12\operatorname{Tr}\!\left[\mathbf R\mathbf X_1(t)\mathbf R^{\mathsf T}\mathbf X_2(t_1)\right]
		\nonumber\\
		&=\frac12e^{-(\gamma+2i\omega_B)(t-t_1)}(e^{-\gamma t_1}-2)^2,
	\end{align}
	where \(\mathbf R=Q^{\mathsf T}JP\) is the matrix defined in the proposition.
	Consequently,
	\begin{align}
		&\operatorname{tr}_B[\mathfrak A_1(t)^2\mathfrak A_2(t_1)^2\hat W(\vec z)]
		\nonumber\\
		&=\pi e^{-(\gamma+2i\omega_B)(t-t_1)}
		\Bigl[e^{-2\gamma t_1}\partial_z^2\partial_{z^*}^{\,2}
		\nonumber\\
		&\qquad+2e^{-\gamma t_1}(e^{-\gamma t_1}-2)\partial_z\partial_{z^*}
		\nonumber\\
		&\qquad+\frac12(e^{-\gamma t_1}-2)^2\Bigr]\delta^{(2)}(\vec z).
	\end{align}
	Substitution of the inverse Weyl transform and integration by parts
	transfer the derivatives to \(\chi(\vec z)\). Using \(\chi(0)=1\) in
	the constant term gives Eq.~\eqref{eq:single-mode-quadratic-two-time}.
	At \(t=t_1=0\),
	\begin{align}
		&\operatorname{tr}_B[\hat a^2(\hat a^\dagger)^2\hat\rho_B] = 
		\left.\partial_z^2\partial_{z^*}^{\,2}\chi\right|_{\vec z=0}
		-2\left.\partial_z\partial_{z^*}\chi\right|_{\vec z=\vec{0}}+\frac12.
		\label{eq:single-mode-static-moment}
	\end{align}
	For the vacuum characteristic function \(\chi=e^{-zz^*/2}\), this
	gives \(2\). More generally, substituting the same characteristic
	function into the two-time result gives
	\(2e^{-(\gamma+2i\omega_B)(t-t_1)}\).

	For products of two quadratic factors, the other combinations are
	\begin{subequations}
	\label{eq:single-mode-quadratic-other-moments}
	\begin{align}
		&\operatorname{tr}_B[\mathfrak A_2(t)^2\mathfrak A_1(t_1)^2\hat\rho_B]
		\nonumber\\
		&=e^{-(\gamma-2i\omega_B)(t-t_1)}e^{-2\gamma t_1}
		\nonumber\\
		&\quad\times\left[\left.\partial_z^2\partial_{z^*}^{\,2}\chi\right|_{\vec z=\vec{0}}
		+2\left.\partial_z\partial_{z^*}\chi\right|_{\vec z=0}+\frac12\right],
		\displaybreak[0]\\
		&\operatorname{tr}_B[\mathfrak A_1(t)^2\mathfrak A_3(t_1)^2\hat\rho_B]
		\nonumber\\
		&=e^{-(\gamma+2i\omega_B)(t-t_1)}e^{-2\gamma t_1}
		\nonumber\\
		&\quad\times\left[\left.\partial_z^2\partial_{z^*}^{\,2}\chi\right|_{\vec z=\vec{0}}
		+2\left.\partial_z\partial_{z^*}\chi\right|_{\vec z=\vec{0}}+\frac12\right],
		\displaybreak[0]\\
		&\operatorname{tr}_B[\mathfrak A_2(t)^2\mathfrak A_4(t_1)^2\hat\rho_B]
		\nonumber\\
		&=e^{-(\gamma-2i\omega_B)(t-t_1)}
		\nonumber\\
		&\quad\times\Bigl[e^{-2\gamma t_1}
		\left.\partial_z^2\partial_{z^*}^{\,2}\chi\right|_{\vec z=\vec{0}}
		\nonumber\\
		&\qquad+2e^{-\gamma t_1}(e^{-\gamma t_1}-2)
		\left.\partial_z\partial_{z^*}\chi\right|_{\vec z=\vec{0}}
		\nonumber\\
		&\qquad+\frac12(e^{-\gamma t_1}-2)^2\Bigr],
		\displaybreak[0]\\
		&\operatorname{tr}_B[\mathfrak A_1(t)^2\mathfrak A_1(t_1)^2\hat\rho_B]
=\operatorname{tr}_B[\mathfrak A_1(t)^2\mathfrak A_4(t_1)^2\hat\rho_B]
		\nonumber\\
		&=e^{-(\gamma+2i\omega_B)(t+t_1)}
		\left.\partial_z^4\chi\right|_{\vec z=\vec{0}},
		\displaybreak[0]\\
		&\operatorname{tr}_B[\mathfrak A_2(t)^2\mathfrak A_2(t_1)^2\hat\rho_B]
=\operatorname{tr}_B[\mathfrak A_2(t)^2\mathfrak A_3(t_1)^2\hat\rho_B]
		\nonumber\\
		&=e^{-(\gamma-2i\omega_B)(t+t_1)}
		\left.\partial_{z^*}^{\,4}\chi\right|_{\vec z=\vec{0}}.
	\end{align}
	\end{subequations}
	As for the linear factors, replacing the latest factor
	\(\mathfrak A_1(t)^2\) by \(\mathfrak A_4(t)^2\), or
	\(\mathfrak A_2(t)^2\) by \(\mathfrak A_3(t)^2\), leaves the trace
	unchanged. These formulas therefore cover both left and right
	multiplication channels of the two-photon interaction.

\section{\label{app:two-photon-rabi-details}Environmental correlations for the two-photon Rabi model}
\label{app:rabi-environmental-moments}

This appendix evaluates the environmental correlation functions for the
model in Sec.~\ref{sec:two-photon-rabi-model}.

The free environmental generator and doubled propagator are those of
Sec.~\ref{sec:single-mode-worked-example}. In particular,
\(\mathbf V_j(t)=\mathbf G(t)^{\mathsf T}\mathbf V_j\mathbf G(t)\).
The diagonal channel matrices select
\begin{align}
 \mathcal V_1^B(t)&=\mathfrak A_1(t)^2+\mathfrak A_2(t)^2,\\
 \mathcal V_2^B(t)&=\mathfrak A_3(t)^2+\mathfrak A_4(t)^2.
\end{align}
Adding the single-factor results in
Eq.~\eqref{eq:single-mode-quadratic-one-time} gives
Eq.~\eqref{eq:first-order-characteristic-average} for both channels.
The identity
\(-\left.\partial_z^2\chi\right|_{\vec z=0}=\mu_1^2+\Sigma_{11}\),
which follows from the definition of the field mean and covariance,
then gives Eq.~\eqref{eq:gaussian-first-order-average}.

At second order, trace preservation and cyclicity make the result
independent of the latest channel \(l\). The two choices of the earlier
channel give
\begin{align}
 m_{l1}(t,t_1)
 &=\operatorname{tr}_B\Bigl[
 (\mathfrak A_1(t)^2+\mathfrak A_2(t)^2)
 \nonumber\\
 &\qquad\times(\mathfrak A_1(t_1)^2+\mathfrak A_2(t_1)^2)\hat\rho_B\Bigr],\\
 m_{l2}(t,t_1)
 &=\operatorname{tr}_B\Bigl[
 (\mathfrak A_1(t)^2+\mathfrak A_2(t)^2)
 \nonumber\\
 &\qquad\times(\mathfrak A_3(t_1)^2+\mathfrak A_4(t_1)^2)\hat\rho_B\Bigr].
\end{align}
Each expression is a sum of four moments already evaluated by the
matrix and differential procedure. Substituting
Eqs.~\eqref{eq:single-mode-quadratic-two-time}
and~\eqref{eq:single-mode-quadratic-other-moments} and combining the
opposite oscillatory phases gives, for \(0\leq t_1\leq t\) and either
\(l=1,2\),
\begin{widetext}
\begin{align}
	m_{lk}(t,t_1)
	&=e^{-\gamma(t+t_1)}
	\Bigl[
	e^{-2i\omega_B(t+t_1)}
	\left.\partial_z^4\chi\right|_{\vec{z}=\vec{0}}
	+
	e^{2i\omega_B(t+t_1)}
	\left.\partial_{z^*}^{\,4}\chi\right|_{\vec{z}=\vec{0}}
	\Bigr]
	\nonumber\\
	&\quad+e^{-\gamma(t-t_1)}
	\Bigl\{
	2e^{-2\gamma t_1}\cos\!\bigl[2\omega_B(t-t_1)\bigr]\,
	\left.\partial_z^2\partial_{z^*}^{\,2}\chi\right|_{\vec{z}=\vec{0}}
	\nonumber\\
	&\qquad+4e^{-\gamma t_1}\bigl[(e^{-\gamma t_1}-1)\cos\!\bigl[2\omega_B(t-t_1)\bigr]
	+i(-1)^{k-1}\sin\!\bigl[2\omega_B(t-t_1)\bigr]\bigr]
	\left.\partial_z\partial_{z^*}\chi\right|_{\vec{z}=\vec{0}}
	\nonumber\\
	&\qquad+(e^{-2\gamma t_1}-2e^{-\gamma t_1}+2)\cos\!\bigl[2\omega_B(t-t_1)\bigr]
	+2i(-1)^{k-1}(e^{-\gamma t_1}-1)\sin\!\bigl[2\omega_B(t-t_1)\bigr]
	\Bigr\}.
	\label{eq:two-photon-second-order-moments}
\end{align}
\end{widetext}
Here \(\chi\) is the initial characteristic function, with
\(\chi(0)=1\) used in the constant term. The different placement of
the two mixed quadratic moments accounts for the sign \((-1)^{k-1}\),
which distinguishes left and right multiplication at the earlier time.

\section{\label{app:initial-state-derivatives}Initial-state moments and coupling diagnostic}

This appendix gives characteristic-function derivatives at arbitrary order
for Fock and Gaussian states. Their fourth-order derivatives are then
used in the Fock--thermal comparison and in the weak-coupling diagnostic
for the numerical simulations.

\subsection{Fock states}

For \(p,q\in\mathbb N_0\), the derivatives of the Fock-state characteristic
function given in Sec.~\ref{sec:matched-fock-thermal-states} satisfy
\begin{equation}
\begin{aligned}
&
	(-i)^{p+q}
	\left.
	\frac{\partial^{p+q}\chi_{n_B}}
	{\partial z^p\partial(z^*)^q}
	\right|_{\vec{z}=\vec{0}}
	\\
	&=
	\delta_{p,q}
	p!
	\sum_{j=0}^{\min(p,n_B)}
	\binom{p}{j}
	\binom{n_B}{j}
	2^{-(p-j)}.
\end{aligned}
\end{equation}

\subsection{Gaussian states}
For a one-mode Gaussian state with characteristic function
\eqref{eq:gaussian-characteristic-function}, the derivatives can be
computed recursively from its mean \(\vec\mu\) and covariance \(\Sigma\).
For \(p,q\in\mathbb N_0\), define the normalized derivatives
\begin{equation}
	\chi_{\mathrm G}^{(p,q)}
	:=
	(-i)^{p+q}
	\left.
	\frac{\partial^{p+q}\chi_{\mathrm G}}
	{\partial z^p\partial(z^*)^q}
	\right|_{\vec{z}=\vec{0}},
	\qquad
	\chi_{\mathrm G}^{(0,0)}=1.
\end{equation}
They are generated at arbitrary order by
\begin{align}
	\chi_{\mathrm G}^{(p+1,q)}
	&=
	\mu_1\chi_{\mathrm G}^{(p,q)}
	+
	p\Sigma_{11}\chi_{\mathrm G}^{(p-1,q)}
	+
	q\Sigma_{12}\chi_{\mathrm G}^{(p,q-1)},
	\\
	\chi_{\mathrm G}^{(p,q+1)}
	&=
	\mu_2\chi_{\mathrm G}^{(p,q)}
	+
	p\Sigma_{12}\chi_{\mathrm G}^{(p-1,q)}
	+
	q\Sigma_{22}\chi_{\mathrm G}^{(p,q-1)},
\end{align}
where a quantity with a negative index is understood to be zero.


\subsection{\label{app:rabi-coupling-diagnostic}Numerical weak-coupling diagnostic for the two-photon Rabi model}
\label{app:matched-fourth-derivatives}

The fourth derivatives used in Sec.~\ref{sec:matched-fock-thermal-states}
follow by setting \(p=q=2\) in the formulas above. For the thermal state
with \(n_{\mathrm{th}}=n_B\), the Gaussian recurrence is evaluated at
\(\vec\mu=\vec{0}\), \(\Sigma_{11}=\Sigma_{22}=0\), and
\(\Sigma_{12}=n_B+1/2\). This gives
\begin{align}
 \left.\partial_z^2\partial_{z^*}^{\,2}\chi_{n_B}\right|_{\vec{z}=0}
 &=n_B(n_B+1)+\frac12,\\
 \left.\partial_z^2\partial_{z^*}^{\,2}\chi_{\mathrm{th}}\right|_{\vec{z}=\vec{0}}
 &=2n_B(n_B+1)+\frac12.
\end{align}
Their difference gives Eq.~\eqref{eq:fock-thermal-fourth-difference}.
The derivatives $\partial_z^4\chi|_{\vec{z}=\vec{0}}$ and
$\partial_{z^*}^{\,4}\chi|_{\vec{z}=\vec{0}}$ vanish for both states, while
$\partial_z\partial_{z^*}\chi|_{\vec{z}=0}=-(n_B+1/2)$ is common to them.

These derivatives also determine the correlation used to estimate the
coupling strength for the numerical simulations. Setting \(t_1=0\) in
Eq.~\eqref{eq:two-photon-second-order-moments} and substituting the
Fock-state derivatives above gives
\begin{align}
	m_{11}^{\mathrm F}(\tau,0)
	&=
	e^{-\gamma\tau}
	\left[
	(n_B+1)(n_B+2)e^{-2i\omega_B\tau}\right.
	\nonumber\\
	&\left.{}+
	n_B(n_B-1)e^{2i\omega_B\tau}
	\right].
\end{align}
Consequently,
\begin{equation}
	\left|m_{11}^{\mathrm F}(\tau,0)\right|
	\leq
	2\left(n_B^2+n_B+1\right)e^{-\gamma\tau}.
\end{equation}
On the observation interval \([0,T]\), define the envelope memory factor
\begin{equation}
	f_\gamma(T)
	=
	\int_0^T e^{-\gamma\tau}\,d\tau
	=
	\begin{cases}
		\dfrac{1-e^{-\gamma T}}{\gamma},
		& \gamma>0,
		\\[1ex]
		T,
		& \gamma=0.
	\end{cases}
\end{equation}
The numerical weak-coupling diagnostic is
\begin{equation}
	\eta_{\mathrm{num}}(T)
	=
	|\lambda g|
	\sqrt{2\left(n_B^2+n_B+1\right)}
	f_\gamma(T).
\end{equation}
The conservative weak-coupling regime corresponds to
\begin{equation}
	\eta_{\mathrm{num}}(T)\ll1
\end{equation}
on the full simulation interval. The envelope bound neglects oscillatory cancellations. We use the resulting criterion for parameter selection as described in Sec.~\ref{sec:numerical-comparison}.

\end{document}